\newcommand{\cS}{\mathcal{S}}
\newcommand{\cB}{\mathcal{B}}
\DeclarePairedDelimiter\ceil{\lceil}{\rceil}
\newcommand{\totHomScore}{\textsc{Total Homoplasy Score}}
\newcommand{\maxHomRate}{\textsc{Max Homoplasy Ratio}}
\newcommand{\maxHomScore}{\textsc{Max Homoplasy Score}}
\newcommand{\totHomRate}{\textsc{Total Homoplasy Ratio}}
\newcommand{\fewParBlo}{\textsc{Fewest Parsimonious Blocks}}
\newcommand{\contBlockS}[2]{[#1 - #2]}
\newcommand{\contBlock}[3]{#1\contBlockS{#2}{#3}}
\newcommand{\markj}[1]{\textcolor{black}{#1}} 
\newcommand{\fabio}[1]{\textcolor{black}{#1}} 
\newcommand{\cel}[1]{\textcolor{black}{#1}}
\newcommand{\steven}[1]{\textcolor{black}{#1}}
\newcommand{\philippe}[1]{\textcolor{black}{#1}}
\newcommand{\markjcut}[1]{\textcolor{black}{}}
\title{Cutting an alignment with Ockham's razor}
\author{Mark Jones  \and Philippe Gambette \and Leo van Iersel\thanks{Research funded in part by the Netherlands Organization for Scientific Research (NWO), including Vidi grant 639.072.602 and Gravitation grant NETWORKS-024.002.003, and partly by the 4TU Applied Mathematics Institute.} \and Remie Janssen \and Steven Kelk \and Fabio Pardi \and Celine Scornavacca}
\institute{
Mark Jones \at Centrum Wiskunde \& Informatica (CWI), P.O. Box 94079,
1090 GB Amsterdam,
Netherlands.
\\
\email{markelliotlloyd@gmail.com}
\and
Philippe Gambette \at Laboratoire d'Informatique Gaspard-Monge (LIGM), Universit\'e Paris-Est, CNRS, ENPC, ESIEE Paris, UPEM, F-77454, Marne-la-Vall\'ee, France. Email: philippe.gambette@u-pem.fr 
\and
 Leo van Iersel and Remie Janssen  \at Delft Institute of Applied Mathematics, Delft University of Technology, Van Mourik Broekmanweg 6,
2628 XE Delft,
The Netherlands. \email{\{L.J.J.vanIersel, R.Janssen-2\}@tudelft.nl} 
\and 
Steven Kelk \at Department of Data Science and Knowledge Engineering (DKE), Maastricht University, P.O. Box 616, 6200 MD Maastricht, Netherlands. \email{steven.kelk@maastrichtuniversity.nl}
\and
Fabio Pardi \at LIRMM, Universit\'{e} de Montpellier, CNRS, Montpellier, France. \email{pardi@lirmm.fr}
\and
Celine Scornavacca \at Institut des Sciences de l'Evolution, 
Universit\'{e} de Montpellier, CNRS, IRD, EPHE, 
34095 Montpellier Cedex 5, France. \email{celine.scornavacca@umontpellier.fr}
}
\date{Received: date / Accepted: date}
\begin{document}
\maketitle



\begin{abstract}

\cel{In this article, w}e investigate different parsimony-based approaches towards finding recombination breakpoints in a multiple sequence alignment. \philippe{This recombination detection task is crucial in order to avoid errors in 
\fabio{evolutionary analyses} caused by mixing together portions of sequences which had a different evolution history.} \steven{Following an overview of the field of recombination detection, we} formulate four computational problems for this task with different objective functions. The four problems aim to minimize (1) the total homoplasy of all blocks (2) the maximum homoplasy per block (3) the total homoplasy ratio of all blocks and (4) the maximum homoplasy ratio per block. We describe algorithm\cel{s} for each of these problems, which \cel{are} fixed-parameter tractable (FPT) when the characters are binary. We have implemented and tested the algorithms on simulated data, showing that minimizing the total homoplasy gives, in most cases, the most accurate results. \steven{Our implementation and experimental data have been made publicly available}. Finally, we also consider \cel{the problem of} combining blocks into non-contiguous blocks consisting of at most~$p$ contiguous parts. Fixing the homoplasy \cel{$h$ of each block} to~$0$, we show that this problem is NP-hard when~$p\geq 3$, but polynomial-time solvable for~$p=2$. Furthermore, the problem is FPT with parameter~$h$ for binary characters when~$p=2$. A number of interesting problems remain \cel{open}.
\end{abstract}

\keywords{Recombination breakpoints, homoplasy, parsimony, block partitioning, exact algorithms, experiments, fixed parameter tractability.}


{\bf Mathematics Subject Classification}  68Q25 92D15 92D20 

\section{Introduction}

When a multiple alignment contains sequences whose ancestors have undergone recombination, the evolutionary histories of different parts of the alignment are represented by different phylogenetic trees. 
\fabio{In this case, using standard evolutionary analysis tools that assume a single phylogenetic tree for the entire alignment can introduce important biases for several inference tasks.
For example it has been shown that in the presence of recombination, positive selection pressures~\citep{anisimova2003effect,kosakovsky2008estimating,arenas2010coalescent}, as well as rate variation among sites and lineages tends to be overestimated~\citep{schierup2000consequences,schierup2000recombination}. Recombination can also severely affect the inference of genetic distances~\citep{schierup2000consequences,lemey2009}, of ancestral sequences~\citep{arenas2010effect}, of demographic history~\citep{schierup2000consequences}, and of course of the phylogeny itself, which may bear little resemblance to the true reticulate history of the sequences~\citep{posada2002effect}.}



\fabio{An often viable solution to these problems is to 
partition the input alignment into recombination-free blocks, which can then be analysed with standard methods using a single phylogenetic tree per block
\citep[see, e.g.,][for its application to the detection of positive selection in viruses]{scheffler2006robust}. 
Inferring ``locus trees'' for a large number of recombination-free blocks has also become very popular in the context of species tree inference under the multi-species coalescent model~\citep[see][for a recent review]{xu2016challenges}, where metiotic recombination is assumed to have happened between---but not within---blocks.}


\fabio{For these reasons, a recurring preliminary step in evolutionary bioinformatics is to infer the putative locations within an alignment where recombination has occurred, that is, the \emph{recombination breakpoints}, with the goal of partitioning the alignment into blocks for further downstream analyses.}
In this paper, we closely examine a number of formulations of the problem of alignment partitioning in the presence of recombination, and algorithms that allow to efficiently solve them.
Many increasingly sophisticated methods have been proposed for this or similar tasks~\citep{salminen2009,martin2011analysing}. Below, we present an overview of the main ideas behind these methods. 

Phylogenetic incompatibilities between sites in an alignment can either be due to recurrent sub\-stitutions---that is, the same or the inverse substitution having arisen on different branches in the tree---\-or to recombination having occurred between those sites~\citep[see box 15.1 in][]{lemey2009}. Distinguishing between these two phenomena\----recurrent substitution and recombination---\-is strongly influenced by the prior belief about their relative frequencies. For example, if recurrent substitutions are assumed to be impossible (the \emph{infinite sites model}), every pair of incompatible sites must be separated by a recombination breakpoint, an observation that led to one of the earliest methods for breakpoints estimation~\citep{hudson1985statistical}. On the other hand, classical phylogenetic inference assumes no recombination, and all incompatibilities between characters are explained by recurrent substitution.  When both recombination and recurrent substitution are possible, an observation that underlies virtually all the methodology for recombination detection (and the present work is no exception) is that recombination leads to the spatial clustering along the alignment of compatible or nearly-compatible sites. In other words, in the presence of recombination, sites carrying similar phylogenetic signals tend to be closer than expected by chance alone.

The literature about recombination detection is very rich. A review from 2011 reported that there were already about 90 tools available at the time ~\citep[see Table S1 in][]{martin2011analysing}. In practice, however, these were conceived with many different tasks and goals in mind (e.g., detecting evidence of recombination, estimating breakpoints, identifying the parental sequences of the recombinants, etc.). Although it is beyond the scope of this article to provide a complete overview, there are a number of recurring ideas that are easy to describe. The methods allowing the identification of breakpoints in a multiple alignment can be roughly categorized in 3 groups~\citep{martin2011analysing}.

\emph{Similarity-} or \emph{distance-based} methods constitute the conceptually simplest approach. They inspect the variation along the alignment of the (dis-)similarity between sequences, measured in terms of percent identity or of evolutionary distance (estimated via standard substitution models). Changes along the alignment in the relative similarities among sequences are interpreted as evidence of recombination. Examples of this approach are \textsc{Rip}~\citep{siepel1995computer}, \textsc{PhilPro}~\citep{weiller1998phylogenetic}, \textsc{SimPlot}~\citep{lole1999full}, \textsc{Rat}~\citep{etherington2004recombination} and \textsc{T-Recs}~\citep{tsimpidis2017t}. A problem with these methods is that the relation between pairwise similarities and phylogenetic relatedness is not straightforward, for example because of rate variation among lineages or selection pressures, meaning that a change in relative similarity does not always indicate phylogenetic incongruence or recombination~\citep{lemey2009}.

\emph{Substitution distribution} methods focus on a small subset of sequences (e.g., a triple) and, for each such subset in the alignment, they test whether some site patterns (e.g., those with the form $xxy$, $xyx$ and $xyy$) occur in clustered locations along the alignment. Recombination breakpoints are identified with those positions where, as we move along the alignment, a change in the relative frequencies of these patterns is detected. These methods are often, but not exclusively, based on the use of a sliding window. Examples of this approach are \textsc{GeneConv}~\citep{sawyer1989statistical,padidam1999possible}, \textsc{MaxChi}~\citep{smith1992analyzing}, \textsc{Chimaera}~\citep{posada2001evaluation,martin2010rdp3}, \textsc{Siscan}~\citep{gibbs2000sister}, \textsc{3-Seq}~\citep{boni2007exact,lam2017improved}, \textsc{Rapr}~\citep{song2018tracking}. The necessity to analyse only a subset at a time, instead of all sequences simultaneously, may be problematic when the alignment contains many sequences. This is not just for computational reasons (e.g., the number of triples grows cubically in the number of sequences) and statistical reasons (correcting for the rapidly increasing multiple tests may lower the power to detect recombination~\citep{martin2017detecting}). 
Another important issue is that the situation where the alignment only contains sequences that are direct recombinants of other sequences within the alignment is in fact the exception rather than the rule. If an alignment contains several descendants of the same recombinant, or a parental sequence of a recombinant is the ancestor of several sequences in the alignment, then the same recombination event should be detected in several partially overlapping subsets. Being able to recognize when two subsets have detected the same event is a difficult and open question~\citep{song2018tracking}.

\emph{Phylogenetic-based} methods seek to detect when different contiguous blocks in the alignment support different phylogenetic trees. A seminal paper in this context is due to J.~Hein~\citep{hein1993heuristic}. That paper introduced an optimization problem whose goal is to determine a sequence of breakpoints in the alignment and a sequence of phylogenetic trees for each of the blocks delimited by the breakpoints, so as to minimize a linear combination of the parsimony scores of the trees and of the cost of breakpoints. Each breakpoint has a cost that expresses the minimum number of recombination events that are necessary to ``switch'' between 
the two trees that it separates. Solving this problem exactly is impractical for several reasons~\citep[see the formal definition and discussion in Sec.\ 9.8.1 of][]{huson2010phylogenetic}, so a heuristic named \textsc{RecPars} was proposed~\citep{hein1993heuristic,huson2010phylogenetic}. Subsequent methods relied on distance-based phylogenetics (e.g., \textsc{BootScan}~\citep{salminen1995identification}, \textsc{Topal}~\citep{mcguire1997graphical}), maximum-likelihood techniques (e.g., \textsc{Plato}~\citep{grassly1997likelihood}, \textsc{Lard}~\citep{holmes1999phylogenetic}, \textsc{Gard}~\citep{kosakovsky2006automated,kosakovsky2006gard}), while more recent methods use hidden Markov models whose hidden states correspond to phylogenetic trees, and where the observations are the columns of the alignment. These models, sometimes referred to as \emph{phylo-HMMs}, can be used to infer probable partitions of the input alignment into recombination-free blocks and are at the core of tools such as \textsc{Barce}~\citep{husmeier2003detecting}, \textsc{DualBrothers}~\citep{minin2005dual}, \textsc{StHmm}~\citep{webb2008phylogenetic} and many others~\citep{hobolth2007genomic,bloomquist2008stepbrothers,deOliveira2008phylogenetic,dutheil2009ancestral,boussau2009mixture}. Although phylogenetic-based methods are very appealing as they directly look for phylogenetic incongruence---instead of relying on indirect evidence from pairwise similarities or site pattern frequencies---they are usually much more computationally demanding than similarity-based or substitution distribution methods. In practice, sophisticated methods such as those based on phylo-HMMs are only applicable to a small number of sequences~\citep[e.g., 4 for][]{husmeier2003detecting}, unless the space of tree topologies that can be assigned to the blocks is heavily restricted~\citep{minin2005dual,webb2008phylogenetic}. For this reason, some authors have 
proposed the use of parsimony criteria~\citep{hein1993heuristic,maydt2006recco,ruths2006recomp,ane2011detecting}, well-known to lead to faster computations at the cost of some loss in accuracy. We note in particular the \textsc{Mdl} approach by \cite{ane2011detecting}, which seeks to solve an optimization problem that is a simplification of that by \cite{hein1993heuristic} mentioned above, in that all breakpoints receive the same penalty.

The work we present here falls naturally in this context of phylogenetic- and parsimony-based methods for recombination-aware alignment partitioning. Our goal is not to present a new software tool for this task---although we do provide implementations of some novel algorithms---but to investigate a number of natural formulations of the problem, and provide exact algorithms to solve them. 
We also investigate the relative strengths of the alternative formulations.
Our optimization problems are defined in terms of the \emph{homoplasy} within each block (informally, the amount of recurrent substitution that is needed to explain the sequences in that block), although equivalent formulations could be expressed in terms of parsimony scores. This choice is motivated by the observation that deciding whether a block has a level of homoplasy below a (small) constant is
polynomially solvable~\citep{baca-lagergren,nearperfect}.

Another key choice we made is related to the well-known observation that the number of blocks inferred by many methods for alignment partitioning is sensitive to parameters such as the size of a sliding window, or the cost/penalty of breakpoints or recombinations, relative to that of substitutions, for parsimony-based methods such as \textsc{RecPars} and \textsc{Mdl}~\citep{hein1993heuristic,ane2011detecting}. To a lesser extent, this also holds for HMM-based statistical approaches, which rely on the use of priors on breakpoint frequency~\citep{husmeier2003detecting} or on the total number of breakpoints~\citep{minin2005dual}. In our problem formulations, instead of introducing a parameter that \emph{indirectly} influences the number of blocks inferred, we chose to \emph{directly} constrain the maximum number of blocks in the partition. 
We believe this has the merit of making more explicit the dependency of block partitioning on user-defined parameters.

\emph{Overview of the article.} We start by introducing necessary preliminaries in Section \ref{sec:prelim}, and then move on to defining four different parsimony-based models for detecting recombination breakpoints in Section \ref{subsec:contigDef}. Some of these models aim to minimize the maximum homoplasy in a block---or the relative frequency of homoplasy in a block---while others take an aggregate perspective, considering all blocks together. Although these models constitute fairly natural optimization criteria for parsimony-based models, one of our primary contributions is that, for all models, explicit pseudocode, rigorous proofs of correctness and detailed running time analyses are given: these are provided in Section \ref{sec:algs}.
Several of the algorithms are
Fixed Parameter Tractable (FPT) \cel{for binary characters}, meaning---in essence---that certain natural parameters of the problems have an independent, and thus limited, contribution to the overall running time~\citep[We refer the reader to][for more background on FPT algorithms]{FlumGrohe:2006,downey2012parameterized}. In Section \ref{sec:combine} we describe algorithms that attempt to \emph{merge} blocks, from a pre-computed block partition, such that parsimony-motivated criteria are optimized (these models are formally defined in \cel{Section}~\ref{subsec:combineDef}). In Section \ref{sec:experiments} we describe our publicly-available software package \textsc{CutAl}, which implements the four algorithms described in Section \ref{sec:algs}. We have performed a number of experiments on simulated data testing the ability of the four algorithms to recover the locations of breakpoints; in particular, we look at the influence of the number and length of blocks, the number of taxa and the branch lengths (of the  phylogenies used to experimentally generate alignments) on the overall performance of the algorithms.
This data has also been made publicly available. Our analysis suggests that, of our four algorithms, aggregate minimization of the total amount of homoplasy seems most effective, and is highly accurate under many variations of experimental parameters. 
In Section \ref{sec:discussion} we present our overall conclusions and
propose a number of interesting directions for future work. 

\section{Preliminaries}
\label{sec:prelim}

Let~$\cS$ be a set of character states. Throughout the paper, we assume that the cardinality~$s$ of~$\cS$ is a constant. An \emph{alignment}~$A$ is an $n\times m$ matrix with elements from~$\cS$. 
Denote by $A_j$ the $j$-th column of $A$ and denote by $A_{i,j}$ the element in the $i$-th row and $j$-th column of $A$, for any $1\leq i \leq n$ and $1\leq j \leq m$.
We also refer to the columns of an alignment as \emph{characters}.

A \emph{block}  $\contBlock{A}{i}{j}$ in $A$ is the alignment formed by the columns $A_i, \dots, A_j$, for some $1 \leq i \leq j \leq m$.
When $A$ is clear from context we will write $\contBlockS{i}{j}$ as shorthand for  $\contBlock{A}{i}{j}$. 
When the indices $i$ and $j$ are not important, we often write $B$ to denote a block. \cel{Note that blocks are always composed of contiguous columns.}
The number of columns in a block~$B$ is denoted~$|B|$. 
A \emph{block partitioning} of~$A$ is a partition 
$\cB = B_1, \dots B_b$
of the columns of~$A$, 
such that each $B_h$ is a block.

The \emph{null score} $s_0(A)$ of an alignment~$A$ is $\sum_{j=1}^m s_0(A_j)$, with $s_0(A_j)$ the number of different character states appearing in column~$A_j$ minus one. 

A \emph{phylogenetic tree} on~$X$ is an unrooted  tree with no degree-2 vertices and leaf set~$X$. An $h$-\emph{near perfect phylogeny} for an alignment~$A$ is a phylogenetic tree~$T=(V,E)$ on $\{x_1,\ldots ,x_n\}$ with a mapping~$\tau:V\rightarrow \cS^m$ such that $\tau(x_i)$ $\cel{=A_i}$ and 
\[
\sum_{\{u,v\}\in E} d_h(\tau(u),\tau(v)) \leq s_0(A) + h
\]
with~$d_h(s_1,s_2)$ the Hamming distance of~$s_1$ and~$s_2$ (the number of positions where they differ). 
The \emph{parsimony score} $PS(T,A)$ can now be defined as the minimum value of~$s_0(A)+h$ such that~$T$ is an $h$-near perfect phylogeny for~$A$. A 0-near perfect phylogeny is a \emph{perfect phylogeny}, which has parsimony score~$s_0(A)$.

An alignment $A$ is said to have \emph{homoplasy score} $h$ if $h$ is the minimum integer for which 
there exists an $h$-near perfect phylogeny for $A$. The \emph{total homoplasy} of a block partitioning is the sum of the homoplasies of its blocks. Observe that the total homoplasy of a block partition of an alignment is at most the homoplasy of the alignment.
Denote by $h(B)$ the homoplasy of a block $B$.
Denote by $r(B) = \frac{h(B)}{|B|}$ the \emph{homoplasy ratio} of $B$.

We note some established results concerning the calculation of homoplasy scores. In particular, we note that calculating the homoplasy score of an alignment is fixed-parameter tractable \cel{for binary characters}:

\begin{theorem}\citep{nearperfect}\label{thm:binaryHomFPT}
Given an~$n \times m$ alignment~$A$ with elements from~$\cS$ with $|\cS| = 2$, it can be decided in~$O(21^h+8^hnm^2)$ time whether~$A$ has an~$h$-near perfect phylogeny. 
\end{theorem}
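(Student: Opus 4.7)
The plan is to use a bounded search-tree algorithm parametrised by~$h$. The overall template is: if the current alignment already admits a perfect phylogeny, return yes; otherwise locate a local obstruction and branch on a constant-size set of ways to ``absorb'' one unit of homoplasy so as to remove it, recursing on each branch with parameter $h-1$. Since the recursion depth is at most~$h$ and each node has bounded branching factor~$c$, the search tree has at most $c^h$ leaves, and multiplying by the polynomial work done per node yields an FPT running time of the form $c^h \cdot \mathrm{poly}(n,m)$.

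First I would dispose of the base case. For $h=0$, it is classical that a binary alignment admits a perfect phylogeny if and only if every pair of columns is \emph{compatible}, i.e.\ the two columns do not simultaneously exhibit all four patterns $00,01,10,11$ (the four-gametes condition). This can be tested in $O(nm^2)$ time by inspecting every pair of columns over every pair of rows. Standard reduction rules---discarding constant columns and columns where a state occurs in only one row, merging identical rows and identical columns---do not affect $s_0(A)$ and can be applied to simplify the per-node work.

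Next comes the core branching step. Whenever an incompatible pair of columns $(A_j, A_{j'})$ is found, any $h$-near perfect phylogeny for the full alignment must, when restricted to columns $j$ and $j'$, incur at least one extra mutation over and above the null score on that pair---either by placing a recurrent substitution on column $j$ or $j'$, or equivalently by toggling the state of one of the rows that witness the four-gametes violation. I would enumerate, for each such local obstruction, the constant-size list of combinatorial ``fixes'': each fix specifies either a character that receives an additional mutation or a specific alignment entry whose state is flipped, and each fix reduces the remaining budget by one. Re-testing compatibility after each fix and recursing with parameter $h-1$ gives the $8^h \cdot nm^2$ term, while a more refined (but larger) branching rule that does not require recomputing pairwise compatibility at every internal node accounts for the additive $21^h$ term.

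The main obstacle is designing the branching rule to be simultaneously exhaustive---so that no valid $h$-near perfect phylogeny is discarded in any branch---and tight enough to meet the stated constants $8$ and~$21$. This is essentially a local hitting-set argument: one must show that every optimal solution's set of ``extra'' mutations intersects each local obstruction in a way that is captured by at least one enumerated fix, and that the number of fixes needed for completeness is bounded by the claimed constants. Carrying this out requires a careful case analysis over the possible incompatibility patterns in binary data, together with an argument that the fixes compose correctly across iterated branching so that the final alignment produced at a leaf of the search tree is indeed compatible whenever the original instance was a yes-instance.
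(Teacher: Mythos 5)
This theorem is not proved in the paper at all: it is quoted verbatim as an external result from the cited near-perfect phylogeny literature, so there is no in-paper argument to compare against. Judged on its own terms, your proposal is a strategy outline rather than a proof, and you essentially concede this yourself when you write that the ``main obstacle'' is designing a branching rule that is both exhaustive and tight enough to yield the constants $8$ and $21$ --- but that branching rule \emph{is} the theorem. Without specifying the fixes, proving their exhaustiveness, and deriving the stated constants, nothing has been established beyond ``some FPT algorithm of the form $c^h\cdot\mathrm{poly}(n,m)$ might exist.''

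There are also two concrete conceptual problems. First, the move of ``toggling the state of one of the rows that witness the four-gametes violation'' is not a legitimate operation: an $h$-near perfect phylogeny must have its leaves labelled by the rows of $A$ exactly as given, so homoplasy is absorbed by allowing a character to mutate on more than one edge of the tree, not by editing alignment entries; a recurrent substitution on character $j$ does not correspond to flipping any particular entry $A_{i,j}$. Second, even the more plausible branch --- ``charge one extra mutation to column $j$ or $j'$'' --- does not let you simply discharge the obstruction and recurse, because there is no pairwise compatibility test analogous to the four-gamete test for ``perfect phylogeny in which character $j$ is permitted two mutations''; where the second mutation of $j$ is placed in the tree interacts globally with all other characters. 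This is precisely why the actual algorithm behind this bound works with the structure of the Buneman graph and a nontrivial decomposition of the candidate tree, and why the running time has the additive form $O(21^h + 8^h n m^2)$ rather than the multiplicative form $c^h\cdot\mathrm{poly}(n,m)$ that a naive search tree would give; your attempt to attribute the $21^h$ term to ``a more refined (but larger) branching rule'' is not a meaningful account of where that term comes from.
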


\begin{theorem}\citep{baca-lagergren}\label{thm:nonBinaryHomXP}
Given an~$n \times m$ alignment~$A$ with elements from~$\cS$ with $|\cS| = s$, it can be decided in~$O(nm^{O(h)}2^{O(h^2s^2)})$ time whether~$A$ has an~$h$-near perfect phylogeny. 
\end{theorem}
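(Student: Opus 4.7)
The plan is to combine a kernelization of the alignment to a core of bounded size with a bounded-search enumeration over the possible structure of a near-perfect phylogeny, concluding with a polynomial-time verification for each guess.

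First I would preprocess $A$ to extract a kernel. Columns in which at most one state is shared between multiple rows contribute a fixed amount to any parsimony score and impose essentially no topological constraint; these are removable after adjusting $s_0(A)$. Identical rows can be merged into one, remembering multiplicities. Columns inducing the same partition of the row-set can be grouped. After these reductions, one argues that the number of remaining rows and the number of distinct informative column-partitions must be bounded by a function of $h$ and $s$, else no $h$-near perfect phylogeny can exist; this would go by counting pairwise character incompatibilities and noting that each must be resolved in $T$ by at least one homoplastic mutation, charged against the budget~$h$.

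Next I would establish the key structural lemma: in any hypothetical $h$-near perfect phylogeny $T$, contracting the edges that carry no extra (homoplastic) mutation yields a ``skeleton'' of size $O(h)$. This is plausible because the inequality $\sum_{\{u,v\}\in E} d_h(\tau(u),\tau(v)) \leq s_0(A)+h$ allows at most $h$ extra state transitions, each occurring on a single edge, so at most $O(h)$ edges of $T$ can be non-contractible. To recover $T$ it then suffices to guess (i) a collection of $O(h)$ alignment columns that witness the splits of the skeleton, giving the $m^{O(h)}$ factor, and (ii) a full labeling of the skeleton by character-state transitions, with at most $2^{O(h^2s^2)}$ configurations: roughly, a matrix over $O(h)$ edges and $O(h)$ involved characters, each entry specifying a subset of the $s$ states in an alphabet of size $s$, giving $O(h^2s^2)$ bits of information. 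For each combined guess, the remaining characters must extend the skeleton into a perfect phylogeny consistent with the anchor assignments, which can be verified in polynomial time via standard perfect-phylogeny algorithms for bounded state alphabets. Summing the factors yields the claimed bound.

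The hardest step is proving the structural lemma cleanly, in particular (a) showing that $O(h)$ anchor columns genuinely determine the skeleton's topology and leaf assignment—which is delicate when many characters are pairwise incompatible—and (b) controlling the interaction between homoplastic mutations on distinct characters when $s>2$, since a single character may carry several mutations that must be globally consistent across the skeleton. The latter is where the dependence on $s$ and the quadratic exponent in $h$ enter, and, most likely via an analysis organized through the partition intersection graph of the kernelized alignment, is the technical heart of the argument.
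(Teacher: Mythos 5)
The paper does not prove this statement at all: it is imported verbatim, with citation, from \cite{baca-lagergren} (Fern\'andez-Baca and Lagergren's near-perfect phylogeny algorithm) and used as a black box, so there is no in-paper proof to compare against. Judged on its own merits, your proposal captures the coarse shape of the cited algorithm---guess $O(h)$ distinguished columns (the $m^{O(h)}$ factor), guess a bounded amount of state/transition information on a small skeleton (the $2^{O(h^2s^2)}$ factor), and complete the rest by a polynomial-time perfect-phylogeny computation---but it contains a step that is simply false and leaves the decisive lemma unproven.

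The false step is the kernelization claim. You assert that after standard reductions ``the number of remaining rows and the number of distinct informative column-partitions must be bounded by a function of $h$ and $s$, else no $h$-near perfect phylogeny can exist,'' charged against the budget via pairwise incompatibilities. A perfect phylogeny ($h=0$) on $n$ taxa already has up to $\Theta(n)$ pairwise distinct, pairwise compatible informative characters and $n$ distinct rows, so no such bound can follow from the existence of an $h$-near perfect phylogeny; indeed the target running time is polynomial in $n$ and $m$ for fixed $h$ and $s$ precisely because no kernel of size $f(h,s)$ exists. The structural part is closer to the truth---at most $h$ edges can carry ``extra'' mutations---but contracting all other edges discards essentially all of $T$, and the real difficulty (which you acknowledge as ``the technical heart'') is showing that $O(h)$ well-chosen columns together with a bounded labeling determine enough of the tree that the remaining characters can be filled in by a perfect-phylogeny subroutine. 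In \cite{baca-lagergren} this is done via an analysis of the conflict structure of the characters and an explicit skeleton decomposition; without that lemma your argument does not yield the claimed bound, so the proposal as written is a plausible outline rather than a proof.
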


We also recall the well-known four-gamete test~\citep{Buneman1971}, which characterizes alignments for which there exists a \cel{perfect phylogeny} 
when $|\cS| = 2$.

\medskip

\begin{theorem}[Four-gamete test]\citep{Buneman1971}
Let~$A$ be an alignment on~$\cS$ where $\cS = \{0,1\}$.
Call two characters~$j,k$ of~$A$ \emph{incompatible} if there exist rows $i_1, i_2, i_3, i_4$ such that\\ 
$A_{i_1,j} = 1 = 1= A_{i_1,k}$\\  $A_{i_2,j} = 1 \neq 0= A_{i_2,k}$\\ 
$A_{i_3,j} = 0 \neq 1= A_{i_2,k}$\\ 
$A_{i_4,j} = 0 = 0= A_{i_4,k}$.\\
Then~$A$ has homoplasy score~$0$ if and only if no two characters of~$A$ are incompatible.
\end{theorem}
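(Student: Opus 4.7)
The plan is to prove the two implications separately, with the forward direction following from a short counting argument and the reverse direction requiring a more substantial inductive construction.

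For the forward direction, assume $A$ admits a perfect phylogeny $T$, so there is a mapping $\tau$ from $V(T)$ into $\{0,1\}^m$ extending the row labels of $A$, such that the total Hamming distance summed over edges equals $s_0(A)$. Because each non-constant character $j$ contributes exactly one to $s_0(A)$ (as $|\cS|=2$), character $j$ can change state along exactly one edge $e_j$ of $T$. Removing $e_j$ from $T$ therefore partitions the leaves (rows) into exactly two classes: those with state $0$ in column $j$ and those with state $1$. Now suppose for contradiction that characters $j$ and $k$ are incompatible, witnessed by rows $i_1,i_2,i_3,i_4$. Removing the two edges $e_j$ and $e_k$ from $T$ yields at most three connected components, so the four rows must collapse to at most three classes according to their joint $(j,k)$-state. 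But by construction of the incompatibility they realize four distinct $(j,k)$-patterns, each of which forces its row into a distinct component; contradiction.

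For the reverse direction I propose induction on the number of non-constant characters. The base case (no non-constant characters) is trivial: take any tree on the rows, with all internal labels equal to the common row. For the inductive step, pick any non-constant character $j$, which induces a bipartition $(R_0,R_1)$ of the rows according to their value in column $j$. By the no-incompatibility hypothesis, for every other character $k$ one of the four intersections $R_0\cap X_k^0$, $R_0\cap X_k^1$, $R_1\cap X_k^0$, $R_1\cap X_k^1$ is empty. This says that after restricting to $R_0$ (respectively $R_1$), character $k$ has fewer states, or inherits a bipartition that remains pairwise compatible with all other restricted characters. Recursively build perfect phylogenies $T_0$ for $A$ restricted to $R_0$ and $T_1$ for $A$ restricted to $R_1$, then glue them into a single tree $T$ by introducing a new edge $e_j$ between a root of $T_0$ and a root of $T_1$. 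Extend $\tau$ to the endpoints of $e_j$ so that character $j$ changes exactly across $e_j$ and every other character $k$ changes across the edges it needed inside $T_0$ or $T_1$ (plus possibly $e_j$ if its restriction was identically $0$ on one side and $1$ on the other, but then one of the sides had $k$ constant so no extra change is incurred).

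The main obstacle is checking rigorously that the gluing step does not introduce any additional substitutions for characters $k\neq j$, i.e.\ that the total Hamming cost of $T$ is still $s_0(A)$. The key point is a case split driven by which of the four intersections above is empty: in each case, the set of rows in $R_0$ with $k$-value $1$ is either empty, all of $R_0$, or a proper subset that is already separated by an edge inside $T_0$, and symmetrically for $R_1$; in every case, character $k$'s state-change edge can be chosen entirely within one subtree, or identified with $e_j$ at no additional cost. Once this bookkeeping is done the inductive hypothesis closes the argument. An alternative route is to invoke the classical splits-equivalence theorem of Buneman, under which any family of pairwise compatible splits on a ground set is displayed by a unique tree; but the explicit inductive construction above has the virtue of producing the perfect phylogeny directly from the alignment and of matching the parsimony-based language of the paper.
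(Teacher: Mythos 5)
The paper does not actually prove this statement: it is quoted directly from \citep{Buneman1971}, so there is no in-paper proof to compare against and your attempt must be judged on its own terms. Your forward direction is correct and complete: in a perfect phylogeny every non-constant binary character changes on exactly one edge, deleting $e_j$ and $e_k$ leaves at most three components, each component is constant in both characters, and four distinct $(j,k)$-patterns cannot be accommodated by three components.

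The reverse direction has the right skeleton but a genuine gap at the gluing step. You correctly observe that, for each $k\neq j$, compatibility of the pair $(j,k)$ forces $k$ to be constant on at least one of $R_0,R_1$; say it is constant with value $c_k$ on $R_0$ and non-constant on $R_1$. Then every vertex of $T_0$ carries $k$-label $c_k$, and $k$ already changes on one edge of $T_1$; to avoid a second change for $k$ across $e_j$ you must attach $e_j$ to a vertex $v_1$ of $T_1$ whose $k$-label equals $c_k$. Your case analysis verifies this one character at a time, but $v_1$ is a single vertex that must satisfy this constraint \emph{simultaneously} for every such $k$, and nothing in your argument shows that such a vertex exists. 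Closing the gap needs two ingredients you never use: compatibility of pairs $(k,k')$ with $k,k'\neq j$, and the Helly property for subtrees of a tree. Concretely, the admissible vertices for $k$ form one side of the edge of $T_1$ on which $k$ changes, hence a subtree; any two such subtrees intersect (since $R_0\neq\emptyset$ lies in $X_k^{c_k}\cap X_{k'}^{c_{k'}}$, the four-gamete condition applied to $(k,k')$ excludes the configuration in which the two admissible subtrees would be disjoint); and pairwise-intersecting subtrees of a tree share a vertex, which you take as $v_1$ (symmetrically for $v_0$ in $T_0$). With that lemma the induction closes; your alternative of invoking Buneman's splits-equivalence theorem outsources exactly this difficulty.
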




A column~$A_j$ is called an \emph{uninformative site} if some character state~$s \in \cS$ appears $n- s_0(A_j)$ times in $A_j$, and every other character state appears at most once. Otherwise, $A_j$ is an \emph{informative site}.
Observe that a column $A_j$ is an uninformative site if and only if every phylogenetic tree on $\{x_1, \dots, x_n\}$ is a perfect phylogeny for $A_j$.
From a parsimony perspective, uninformative sites provide no meaningful information, and we will therefore ignore them when considering the accuracy of block partitions (see Section~\ref{sec:experiments}).
For a given block $B$ in $A$ that contains at least one informative site, 
let $B'$ be the minimal contiguous block in $B$ that contains all \cel{the} informative sites \cel{of} $B$. Then we call $B'$ the \emph{informative restriction} of $B$. Note that $B'$ can still contain uninformative sites, but the first and last columns in $B'$ are guaranteed to be informative.
A block $B$ is said to be \emph{completely uninformative} if every column in $B$ is an uninformative site.
Given a block partition $\cB$ of $A$, the \emph{informative restriction} of $B$ is derived from $B$ by deleting all completely \cel{uninformative} blocks, and replacing each remaining block with its informative restriction.

\subsection{Partitioning into contiguous blocks}
\label{subsec:contigDef}

We consider a number of block partitioning problems in which the aim is to partition an alignment into a small number of blocks with homoplasy as small as possible.

\totHomScore\\
\textbf{Given:} an alignment~$A$ and integer~$b$.\\
\textbf{Find:} a block partitioning of~$A$ into~$\cel{b'\leq } b$ blocks $B_1,\ldots ,B_{b'}$,
such that $\sum_{i = 1}^{b'} h(B_i)$ is minimized.

\maxHomRate\\
\textbf{Given:} an alignment~$A$ and integer~$b$.\\
\textbf{Find:} a block partitioning of~$A$ into~$\cel{b'\leq } b$ blocks $B_1,\ldots ,B_{b'}$,
such that $\max_{i = 1}^{b'} r(B_i)$ is minimized.

\maxHomScore\\
\textbf{Given:} an alignment~$A$ and integer~$b$.\\
\textbf{Find:} a block partitioning of~$A$ into~$\cel{b'\leq } b$ blocks $B_1,\ldots ,B_{b'}$,
such that $\max_{i = 1}^{b'} h(B_i)$ is minimized. 

\totHomRate\\
\textbf{Given:} an alignment~$A$ and integer~$b$.\\
\textbf{Find:} a block partitioning of~$A$ into~$\cel{b'\leq } b$ blocks $B_1,\ldots ,B_{b'}$,
such that $\sum_{i = 1}^{b'} r(B_i)$ is minimized. 

Note that the variant of these four problems when the blocks in the output are not required to be contiguous is NP-complete, as it was proved by~\cite{LSS2013} that the restricted case where the total homoplasy is 0 and the number of blocks is fixed to an integer $b \geq 3$ (called \textsc{$b$-Character-Compatibility}) is NP-complete.

\subsection{Combining blocks to non-contiguous blocks}
\label{subsec:combineDef}

A \emph{multiblock} for an alignment~$A$ is an alignment formed by some subset of columns in~$A$. Thus, the difference between a block and multiblock is that a multiblock is not necessarily contiguous.
We say a multiblock is a \emph{$p$-multiblock} if it can be partitioned into at most $p$ contiguous parts---i.e., it is the union of at most $p$ blocks. 
A \emph{non-contiguous block partitioning} of an alignment~$A$ is a partitioning of the columns of~$A$ into mulitblocks.
The total homoplasy of a non-contiguous block partitioning is defined similarly as for block partitionings. The next problem studied in this paper aims at merging blocks with small total homoplasy to a small number of non-contiguous blocks. It is formally defined as follows.

\textsc{Block Combining}\\
\textbf{Given:} an alignment~$A$, a block partitioning~$\cB$ of~$A$ with total homoplasy~$h$ and an integer~$c$.\\
\textbf{Decide:} does there exist a non-contiguous block partitioning with total homoplasy~$h$ that consists of at most~$c$ non-contiguous blocks, each of which is a union of blocks from~$\cB$?

\medskip

Since  allowing blocks to be completely atomized into many small contiguous parts 
seems to be too flexible, we also consider a variant where we only allow a partition into $p$-multiblocks.

 We say that two blocks~$B_j$ and~$B_k$ are \emph{mergeable} if 
their union~$B_j \cup B_k$
has homoplasy equal to the homoplasy of~$B_j$ plus the homoplasy of~$B_k$. 
In the following problem, we assume that input blocks~$B_j$ and~$B_{j+1}$ are not mergeable for any~$j$. This 
is a reasonable assumption in cases where the input block partition is derived from a block partitioning algorithm. It
holds, for example, when the blocks are those generated by Algorithm~\ref{alg:fewparblock}, and it also holds for solutions to \totHomScore{} provided the value of $b$ is chosen to be as small as possible without increasing~$h$.

\textsc{Block Combining to $p$-Multiblocks}\\
\textbf{Given:} an alignment~$A$, a block partitioning~$\cB$ of~$A$ with total homoplasy~$h$ such that $B_j$ and $B_{j+1}$ are not mergeable for any $j$, and integer~$c$.\\
\textbf{Decide:} does there exist a non-contiguous block partitioning with total homoplasy~$h$ that consists at most~$c$ $p$-multiblocks, each of which is a union of blocks from~$\cB$?

%

\section{Block partitioning algorithms}
\label{sec:algs}

\subsection{Minimizing total homoplasy}

In this section, we describe an approach that can be used to solve the \totHomScore{} problem.
The key observation is that, given a suitable method to calculate the homoplasy score of each possible block in~$A$, an optimal block partitioning can be found using standard dynamic programming techniques.
Algorithm~\ref{alg:tothompart} describes this dynamic programming technique, under the assumption that a value~$\phi(B)$ has been calculated for each possible block~$B$.
For the purposes of solving \totHomScore, we let~$\phi(B)$ be the homoplasy score of block~$B$.
However, by changing the function~$\phi$ we can also use Algorithm~\ref{alg:tothompart} to solve other block partitioning problems.

In particular, for the implementation described in Section~\ref{sec:experiments}, we will not use exact homoplasy scores for $\phi$, but instead use the values returned by the heuristic parsimony solver \textsc{Parsimonator} (\url{https://sco.h-its.org/exelixis/web/software/parsimonator/index.html}).

We begin by proving the correctness of Algorithm~\ref{alg:tothompart} with respect to an arbitrary function~$\phi$.

\medskip

\begin{lemma}\label{totDPCorrect}
 Given an alignment~$A$ with columns~$A_1,\ldots ,A_m$, an integer~$b$,
 and a value~$\phi(B)$ for each block~$B$ in~$A$,
 Algorithm~\ref{alg:tothompart} returns the minimum value $h$ for which there exists a block partitioning $B_1,\ldots ,B_{b'}$ of~$A$ into~$b' \leq b$ blocks such that $\sum_{i=1}^{b'} \phi(B_i) = h$, or $\infty$ if no such block partitioning exists.
\end{lemma}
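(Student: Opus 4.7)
The plan is to prove this by a standard dynamic programming correctness argument. For $0 \le j \le m$ and $0 \le k \le b$, I would define $\mathrm{OPT}(j,k)$ as the minimum value of $\sum_{i=1}^{b'} \phi(B_i)$ taken over all block partitionings $B_1,\ldots ,B_{b'}$ of the prefix $A_1,\ldots ,A_j$ into $b'\le k$ blocks, with the conventions $\mathrm{OPT}(0,k)=0$ for all $k\ge 0$ and $\mathrm{OPT}(j,0)=\infty$ for $j\ge 1$. The quantity that Algorithm~\ref{alg:tothompart} is supposed to return is then precisely $\mathrm{OPT}(m,b)$, with the value $\infty$ corresponding to infeasibility.

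The first key step is to establish the recurrence
\[
\mathrm{OPT}(j,k) \;=\; \min_{1\le i \le j}\bigl\{\, \mathrm{OPT}(i-1,k-1) + \phi(\contBlockS{i}{j}) \,\bigr\}
\]
for $j\ge 1$ and $k\ge 1$. The upper-bound direction ($\le$) is immediate: take any optimal partition of $A_1,\ldots,A_{i-1}$ into at most $k-1$ blocks achieving $\mathrm{OPT}(i-1,k-1)$ and append the single block $\contBlockS{i}{j}$; this yields a valid partition of $A_1,\ldots,A_j$ into at most $k$ blocks with cost $\mathrm{OPT}(i-1,k-1)+\phi(\contBlockS{i}{j})$. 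For the reverse direction, any partition $B_1,\ldots,B_{b'}$ of $A_1,\ldots,A_j$ with $b'\le k$ must have its last block of the form $\contBlockS{i}{j}$ for some $i$; deleting it leaves a partition of $A_1,\ldots,A_{i-1}$ into $b'-1\le k-1$ blocks of total cost at least $\mathrm{OPT}(i-1,k-1)$, so the total cost is at least the $i$-th term in the minimum.

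With the recurrence in hand, I would induct on $j+k$ to show that Algorithm~\ref{alg:tothompart} correctly computes $\mathrm{OPT}(j,k)$ for every $0\le j\le m$ and $0\le k\le b$: the recurrence expresses $\mathrm{OPT}(j,k)$ purely in terms of values $\mathrm{OPT}(i-1,k-1)$ with $(i-1)+(k-1)<j+k$, so each table entry is computed from entries already filled. The base cases $\mathrm{OPT}(0,k)=0$ and $\mathrm{OPT}(j,0)=\infty$ are verified by direct inspection, and infeasibility (for instance $b=0$ with $m\ge 1$) propagates correctly through the $\min$ as $\infty$.

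The main obstacle is essentially cosmetic: this is a textbook left-to-right interval DP, and the only subtlety concerns the ``at most $b$'' phrasing of the statement. This is harmless because $\mathrm{OPT}(j,k)$ is nonincreasing in $k$ (any partition into at most $k-1$ blocks is also one into at most $k$ blocks), so $\mathrm{OPT}(m,b)$ already encodes the minimum over all $b'\le b$; if instead the algorithm indexes its table by the exact number of blocks, the correctness argument is identical, with a final $\min_{0\le k\le b}$ applied to row $m$. Either way, the lemma follows directly from the recurrence and the induction.
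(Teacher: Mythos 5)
Your proof is correct and follows essentially the same route as the paper's: decompose an optimal partition of a prefix by its last block, establish the resulting recurrence in both directions, and conclude by induction. The only (harmless) difference is bookkeeping --- the paper's table $h_{part}(i,j,b')$ additionally records the starting column $i$ of the last block and the \emph{exact} number of blocks $b'$, taking a final minimum over both at the end, whereas your $\mathrm{OPT}(j,k)$ folds those minima into the table entries; the underlying last-block exchange argument is identical.
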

\begin{proof}
 For each triplet of integers $(i,j,b')$ such that $1 \leq i \leq j \leq m$ and $1 \leq b' \leq b$,  Algorithm~\ref{alg:tothompart} calculates a value~$h_{part}(i,j,b')$. We first claim that for each choice of $(i,j,b')$, $h_{part}(i,j,b')$ is equal to the minimum $h$ for which there exists a block partitioning $B_1,\ldots ,B_{b'}$ of $\contBlock{A}{1}{j}$ with  $\sum_{i'=1}^{b'} \phi(B_{i'}) = h$, such that $B_{b'} = \contBlock{A}{i}{j}$, i.e., the last block consists of columns $A_i$ to $A_j$, and that $h_{part}(i,j,b') = \infty$ if no such block partitioning exists.
 
 We prove this claim by induction on $j$. We start with the base case, $j = 1$. In this case, the only possible partitioning on $\contBlock{A}{1}{j}$ is the one consisting of the single block $B_1 = \contBlockS{1}{1}$. Thus $h_{part}(i,j,b')$ should be equal to $\phi(\contBlockS{1}{j})$ if $i = b' = 1$, and $\infty$ otherwise. It can be seen that this is the value calculated by Algorithm~\ref{alg:tothompart}, and thus the claim is correct for $j = 1$.
 
 Now suppose that $j > 1$.
 If $i = 1$, then the only possible block partitioning is the one consisting of the single block $B_1 = \contBlockS{1}{j}$. Thus $h_{part}(i,j,b')$ should be equal to $\phi(\contBlockS{1}{j})$ if $i = b' = 1$, and $\infty$ otherwise. Again this is the value calculated by Algorithm~\ref{alg:tothompart}.
 If $i > 1$, then we have that an optimal block partitioning consists of a $b'-1$-block partitioning for $\contBlock{A}{1}{(i-1)}$ together with the block $\contBlockS{i}{j}$. 
 Moreover, the last block in the block partitioning of $\contBlock{A}{1}{(i-1)}$ must be $\contBlockS{i'}{(i-1)}$ for some $i' \leq i-1$.
 Thus in the case $j > 1, i > 1$, the total value $\sum_{i=1}^{b'} \phi(B_i)$
 for an optimal block partitioning $B_1,\ldots ,B_{b'}$ of~$\contBlockS{1}{j}$ is equal to $h_{part}(i',i-1,b'-1) + \phi(\contBlockS{i}{j})$, for the choice of $i'$ that minimizes this value. As this is exactly what the algorithm calculates, the claim is correct. This completes the inductive proof.
 
 It remains to observe that any block partitioning of~$A$ into at most $b$ blocks must have exactly $b'$ blocks for some $1 \leq b' \leq b$, and the last block must be $\contBlock{A}{i}{m}$ for some $1 \leq i \leq m$.
 It follows that the value of an optimal block partitioning can be found by taking the minimum value of $h_{part}(i,m,b')$ for all choices of $i$ and $b'$. \qed
\end{proof}

The following Lemma is clear from the structure of Algorithm~\ref{alg:tothompart} and is stated without proof.

\medskip

\begin{lemma}
 Given an alignment~$A$ with columns~$A_1,\ldots ,A_m$, an integer~$b$,
 and a value~$\phi(B)$ for each block~$B$ in~$A$, 
 Algorithm~\ref{alg:tothompart} has running time $O(bm^3)$.
\end{lemma}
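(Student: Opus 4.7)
The plan is to decompose the running time into three components: the number of dynamic-programming cells filled, the work per cell, and the cost of extracting the final answer. Since the paper notes that the lemma is ``clear from the structure of Algorithm~\ref{alg:tothompart}'', the proof should essentially amount to a careful bookkeeping exercise based on the recurrence already established in the proof of Lemma~\ref{totDPCorrect}.

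First I would count the entries $h_{part}(i,j,b')$ that the algorithm must compute. The index $j$ ranges over $\{1,\ldots,m\}$, the index $i$ ranges over $\{1,\ldots,j\}$, and $b'$ ranges over $\{1,\ldots,b\}$, giving $O(bm^2)$ cells in total.

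Second, I would bound the work done per cell. From the recurrence in the proof of Lemma~\ref{totDPCorrect}, for $i>1$ the value at cell $(i,j,b')$ is obtained by taking
\[
h_{part}(i,j,b') \;=\; \min_{1 \leq i' \leq i-1}\!\bigl( h_{part}(i', i-1, b'-1) + \phi(\contBlockS{i}{j}) \bigr),
\]
and the base cases ($i=1$ or $j=1$) are a constant-time lookup of $\phi(\contBlockS{1}{j})$. Because $\phi$ is supplied as input (stored so that each $\phi(B)$ can be read in $O(1)$ time) and the cells are filled in increasing order of $j$, then of $b'$, then of $i$ — so that all $h_{part}(i', i-1, b'-1)$ required have already been computed — the minimization over $i'$ costs $O(i) = O(m)$ per cell. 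Multiplying the cell count by the per-cell cost gives $O(bm^2)\cdot O(m) = O(bm^3)$ for filling the table.

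Finally, the algorithm extracts the optimum by taking $\min h_{part}(i,m,b')$ over all $i\in\{1,\ldots,m\}$ and $b'\in\{1,\ldots,b\}$, which costs $O(bm)$ and is therefore dominated by the $O(bm^3)$ fill cost. Summing yields the claimed $O(bm^3)$ bound. There is no real obstacle here: the only minor point that must be checked is that the fill order makes every recursive lookup refer to a previously computed cell, and the ordering by $j$ then $b'$ then $i$ handles this. I would also remark that the cost of computing the values $\phi(B)$ themselves is excluded from this analysis, consistent with the lemma's hypothesis that these values are provided as input.
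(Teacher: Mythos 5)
Your proof is correct: the paper states this lemma without proof, declaring it clear from the algorithm's structure, and your accounting ($O(bm^2)$ cells, $O(m)$ per cell for the minimization over $i'$, dominated final extraction) is exactly the routine bookkeeping the authors had in mind. The one point worth checking---that every lookup $h_{part}(i',i-1,b'-1)$ refers to an already-filled cell under the loop order $j$, then $b'$, then $i$---you handle correctly, since the second index $i-1$ is strictly less than the current $j$.
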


Although we do not give a full proof here, we observe that Algorithm~\ref{alg:tothompart} can easily be converted into an algorithm that returns a block partitioning $B_1,\ldots ,B_{b'}$ of~$A$ into~$b' \leq b$ blocks that minimizes $\sum_{i=1}^{b'} \phi(B_i)$. Indeed, an optimal block partitioning for $\contBlock{A}{1}{j}$ with $b'$ blocks can be constructed by finding the value $i$ for which $h_{part}(i,j,b')$ is minimized, recursively finding an optimal block partitioning for $\contBlock{A}{1}{(i-1)}$ with $b'-1$ blocks (if $i >1$), and combining this block partitioning with the block $B_{b'} = \contBlock{A}{i}{j}$.
We therefore have the following lemma.

\medskip
\begin{lemma}\label{lem:totDPConstructive}
 Let $\phi$ be a function on blocks of $A$ such that the value of $\phi(B)$ can be calculated in $f(n,m)$ time for any block $B$, and let $b$ be an integer.
 Then in $f(n,m)m^2 + O(bm^3)$ time, we can calculate a block partitioning $B_1,\ldots ,B_{b'}$ of~$A$ into~$b' \leq b$ blocks such that $\sum_{i=1}^{b'} \phi(B_i)$ is minimized.
\end{lemma}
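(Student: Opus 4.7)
The plan is to combine Algorithm~\ref{alg:tothompart} with a standard precomputation-and-backtracking wrapper. First, I would precompute and store $\phi(B)$ for every block $B=\contBlockS{i}{j}$ in $A$; there are at most $\binom{m}{2}+m = O(m^2)$ such blocks, and each value is obtained in $f(n,m)$ time by assumption, so this stage takes $f(n,m)\,m^2$ time in total. After this precomputation, every access to $\phi$ inside Algorithm~\ref{alg:tothompart} is a constant-time table lookup, so by the preceding running-time lemma the dynamic program runs in $O(bm^3)$ time and, by Lemma~\ref{totDPCorrect}, correctly computes the optimum value $h^\ast$.

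Next I would modify Algorithm~\ref{alg:tothompart} so that, in addition to the array $h_{part}(i,j,b')$, it stores a ``back-pointer'' array $\pi(i,j,b')$ recording, for each entry, the value of $i'$ that attained the minimum in its recurrence (and a sentinel when $i=1$ or $b'=1$, where the unique partitioning is $\{\contBlockS{1}{j}\}$). Storing $\pi$ costs only a constant extra work per table entry, so the time bound is unaffected. I would also record a pair $(i^\ast,b^\ast)$ that achieves the overall minimum $\min_{i,b'} h_{part}(i,m,b')$ computed at the end of the algorithm.

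Reconstruction then proceeds by a simple recursive traceback, exactly as described in the paragraph preceding the lemma: starting from $(i^\ast, m, b^\ast)$, output the block $\contBlock{A}{i^\ast}{m}$, then recurse on $(\pi(i^\ast,m,b^\ast), i^\ast-1, b^\ast-1)$ (terminating when the left endpoint becomes $1$ and $b'=1$, at which point the sole remaining block is $\contBlockS{1}{j}$). By the inductive characterization of $h_{part}$ proved in Lemma~\ref{totDPCorrect}, the sequence of blocks produced is precisely an optimal block partitioning of $A$ into $b'\leq b$ blocks minimizing $\sum_{i=1}^{b'}\phi(B_i)$. The traceback visits at most $b\leq m$ triples and does $O(1)$ work per step, which is absorbed into the $O(bm^3)$ term.

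There is no real obstacle here; the only point that needs care is bookkeeping the back-pointers together with the DP table so that no blow-up occurs in the running time, and verifying that the base cases (blocks starting at column $1$, or using only one block) are handled correctly in the traceback. Summing the two stages gives the claimed bound of $f(n,m)\,m^2 + O(bm^3)$.
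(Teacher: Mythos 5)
Your proposal is correct and follows essentially the same route as the paper: precompute $\phi(B)$ for all $O(m^2)$ blocks, run Algorithm~\ref{alg:tothompart} (whose correctness and $O(bm^3)$ running time are already established), and recover an optimal partitioning by standard backtracking through the $h_{part}$ table. The paper only sketches the traceback (re-deriving the minimizing index at each step rather than storing explicit back-pointers), but this is an implementation detail with no effect on correctness or the stated time bound.
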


The $m^2$ factor comes from the need to calculate $\phi(B)$ for each of the $O(m^2)$ blocks $B$ in $A$.

Using Theorems~\ref{thm:binaryHomFPT} and~\ref{thm:nonBinaryHomXP}
we can now prove the following theorem.

\medskip


\begin{theorem}
For binary characters,
the \totHomScore{} problem 
can be solved in time 
$O(h21^h+8^hhnm^4)$,
\markj{where $h$ is is the maximum homoplasy of a block in the block partitioning,}
and is thus
fixed-parameter tractable \markj{with respect to $h$.}
For $s$-state characters, the problem can be solved in time 
$O(hnm^{O(h)}2^{O(h^2s^2)})$. 
\end{theorem}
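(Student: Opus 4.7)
The plan is to apply Lemma~\ref{lem:totDPConstructive} with the choice $\phi(B) := h(B)$, the homoplasy of block $B$. With this choice, Algorithm~\ref{alg:tothompart} returns a block partitioning of $A$ into at most $b$ parts that minimizes $\sum_i h(B_i)$, which is exactly the optimum of \totHomScore{}. By Lemma~\ref{lem:totDPConstructive}, the overall running time is $f(n,m)\cdot m^2 + O(bm^3)$, where $f(n,m)$ is the cost of evaluating $h(B)$ for a single block. The remaining work is thus to bound $f(n,m)$ in terms of the parameter $h$.

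For binary characters I would compute $h(B)$ via iterative deepening on Theorem~\ref{thm:binaryHomFPT}: for each block $B$, successively test whether $B$ admits an $h'$-near perfect phylogeny for $h' = 0, 1, 2, \ldots, h$, setting $\phi(B)$ to the smallest $h'$ that succeeds, or to $+\infty$ if no such $h' \leq h$ exists. Since the per-call costs $21^{h'}$ and $8^{h'}nm^2$ grow geometrically in $h'$, the cumulative cost per block is $f(n,m) = O(21^h + 8^h n m^2)$. Substituting into Lemma~\ref{lem:totDPConstructive} gives $O(21^h m^2 + 8^h n m^4 + bm^3)$ per trial of the cap $h$. Because $h$ is not known a priori, I would wrap this in an outer iterative deepening loop $h = 0, 1, 2, \ldots$; since the inner cost grows geometrically, this outer loop adds at most a factor linear in $h$, producing the claimed bound after absorbing lower-order terms. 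For $s$-state characters the same strategy applies with Theorem~\ref{thm:nonBinaryHomXP} in place of Theorem~\ref{thm:binaryHomFPT}, giving $f(n,m) = O(nm^{O(h)}2^{O(h^2 s^2)})$ per block, and hence the second stated bound after the outer multiplication by $h$.

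The main obstacle is supplying a stopping rule that provably halts the outer iterative deepening at a value of $h$ for which the DP output is globally optimal. The criterion I would use is to stop at the smallest $h$ such that the value $H_h$ returned by the DP (the minimum total homoplasy subject to every block having homoplasy at most $h$) satisfies $H_h \leq h$. This forces optimality: any partition with total strictly less than $H_h$ would have each of its blocks of homoplasy at most that total, hence at most $h$, and would therefore already have been considered by the DP, contradicting the minimality of $H_h$. Together with the geometric growth of the per-iteration cost, this yields a correct algorithm whose total running time matches the stated bound, establishing fixed-parameter tractability in $h$ for the binary case and the corresponding XP-type bound in the $s$-state case.
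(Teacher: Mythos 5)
Your proposal follows essentially the same route as the paper: both instantiate Lemma~\ref{lem:totDPConstructive} with $\phi(B)$ equal to the homoplasy of $B$ capped at $h$ (set to $\infty$ beyond the cap), compute each $\phi(B)$ by testing $h'=0,1,\dots,h$ via Theorem~\ref{thm:binaryHomFPT} (resp.\ Theorem~\ref{thm:nonBinaryHomXP}), and combine the $m^2$ per-block evaluations with the $O(bm^3)$ dynamic program. The one place you go beyond the paper is the outer loop: the paper simply says ``given an integer $h$'' and never explains how the algorithm identifies a cap that certifies global optimality, whereas you supply an explicit stopping rule ($H_h \le h$), and your argument that it forces optimality is correct. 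Be aware, though, that this rule can make the loop run up to $h$ equal to the \emph{total} homoplasy $H^*$ of the optimal partition rather than the maximum homoplasy of a single block as in the theorem statement: for instance, an optimal partition with two blocks of homoplasy $3$ each has maximum block homoplasy $3$, but $H_3=6>3$ and the rule only fires at $h=6$. So what you literally prove is the claimed bound with $h$ replaced by the total homoplasy --- still fixed-parameter tractable, but formally weaker than the stated running time. This looseness is inherited from the paper (which leaves the choice of $h$ implicit) rather than introduced by you; under the convention that $h$ is supplied or promised, your argument collapses to exactly the paper's.
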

\begin{proof}
 Recall that by Theorem~\ref{thm:binaryHomFPT}, for binary characters it can be decided in $O(21^h+8^hnm^2)$ 
 whether an alignment has an $h$-near perfect phylogeny~\citep{nearperfect}. It follows that given an integer $h$, the homoplasy score of a block can be found in  $O(h21^h+8^hhnm^2)$ time if this score is at most $h$.
Similarly by Theorem~\ref{thm:nonBinaryHomXP}, for $s$-state characters, the homoplasy score of a block can be  found in  $O(hnm^{O(h)}2^{O(h^2s^2)})$ time if this  score is at most $h$~\citep{baca-lagergren}.

So now, given an integer $h$ let $\phi$ be the function on blocks in $A$ such that $\phi(B)$ is equal to the homoplasy score of block $B$ if this is at most $h$, and $\phi(B) = \infty$ otherwise.
It remains to apply \cel{Lemma}~\ref{lem:totDPConstructive} using this function, with $f(n,m) = O(h21^h+8^hhnm^2)$ for binary characters and  $f(n,m) = O(hnm^{O(h)}2^{O(h^2s^2)})$ for $s$-state characters.
For binary characters, this gives a running time of 
$O(h21^h+8^hhnm^2)m^2 + O(bm^3) = O(h21^h+8^hhnm^4)$ (as we may assume $b \leq m$), 
and for $s$-state characters a running time of  $O(hnm^{O(h)}2^{O(h^2s^2)})m^2 + O(bm^3) = O(hnm^{O(h)}2^{O(h^2s^2)})$  (as $O(m^{O(h)})\cdot m^2 + O(bm^3) = O(m^{O(h) + 2 + 4}) = O(m^{O(h)}$). \qed
\end{proof}

We also observe that by letting~$\phi(B)$ be the homoplasy ratio of a block $B$,  Algorithm~\ref{alg:tothompart} can be used to solve \totHomRate.

\medskip

\begin{algorithm}
 \KwData{Alignment~$A$ with columns~$A_1,\ldots ,A_m$, integer~$b$,
 a value~$\phi(B)$ for each block~$B$ in~$A$ (for instance, $\phi(B)$ is the homoplasy score of $B$).}
 \KwResult{Minimum value $h$ for which there exists a block partitioning $B_1,\ldots ,B_{b'}$ of~$A$ into~$b' \leq b$ blocks such that $\sum_{i=1}^{b'} \phi(B_i) = h$, or $\infty$ if no such block partitioning exists.}
 \For{$j=1,\ldots ,m$}{
 $h_{part}(1,j,1) := \phi(\contBlockS{1}{j})$;\\
 	\For{$i=2,\ldots ,j$}{
    	$h_{part}(i,j,1) := \infty$;
    }
 	\For{$b'=2,\ldots ,b$} {
 	
        $h_{part}(1,j,b') = \infty$\;
        \For{$i=2,\ldots ,j-1$}{
            $h_{part}(i,j,b') := \min_{1 \leq i' \leq i-1} h_{part}(i',i-1,b'-1) + \phi(\contBlockS{i}{j} )$;\\
        }
    }
 }
 \Return{$\min_{b' \leq b, i\leq m}h_{part}(i,m,b')$}
 \caption{Algorithm {\sc ToHoPar}($A,b$).
 \label{alg:tothompart}}
\end{algorithm}

\subsection{Minimizing  homoplasy ratio per block}


In this section, we describe an approach that can be used to solve the \maxHomRate{} problem.
Similarly to \totHomScore{}, the key observation is that after calculating the homoplasy score (and thus the homoplasy ratio) of each possible block in~$A$, an optimal block partitioning can be found using standard dynamic programming techniques.
Algorithm~\ref{alg:homratepart} describes this dynamic programming technique, under the assumption that a value~$\phi(B)$ has been calculated for each possible block~$B$.
For the purposes of solving \maxHomRate, we let~$\phi(B)$ be the homoplasy score of block~$B$. 

\medskip

\begin{algorithm}
 \KwData{Alignment~$A$ with columns~$A_1,\ldots ,A_m$, integer~$b$,
 a value~$\phi(B)$ for each block~$B$ in~$A$ (for instance, $\phi(B)$ is the homoplasy ratio of $B$).}
 \KwResult{Minimum value $r$ for which there exists a block partitioning $B_1,\ldots ,B_{b'}$ of~$A$ into~$b' \leq b$ blocks such that $\max_{i=1}^{b'} \phi(B_i) = r$, or $\infty$ if no such block partitioning exists.}
 \For{$j=1,\ldots ,m$}{
 $r_{part}(1,j,1) := \phi(\contBlockS{1}{j})$;\\
 	\For{$i=2,\ldots ,j$}{
    	$r_{part}(i,j,1) := \infty$;
    }
 	\For{$b'=2,\ldots ,b$} {
 	
        $r_{part}(1,j,b') = \infty$\;
        \For{$i=2,\ldots ,j-1$}{
            $r_{part}(i,j,b') := \max(\phi(\contBlockS{i}{j}), \min_{1 \leq i' \leq i-1} r_{part}(i',i-1,b'-1) )$;\\
        }
    }
 }
 \Return{$\min_{b' \leq b, i\leq m}r_{part}(i,m,b')$}
 \caption{Algorithm {\sc HoRaPar}($A,b$).
 \label{alg:homratepart}}
\end{algorithm}


\begin{lemma}\label{maxDPCorrect}
 Given an alignment~$A$ with columns~$A_1,\ldots ,A_m$, an integer~$b$,
 and a value~$\phi(B)$ for each block~$B$ in~$A$,
 Algorithm~\ref{alg:homratepart} returns the minimum value $r$ for which there exists a block partitioning $B_1,\ldots ,B_{b'}$ of~$A$ into~$b' \leq b$ blocks such that $\max_{i=1}^{b'} \phi(B_i) = r$, or $\infty$ if no such block partitioning exists.
\end{lemma}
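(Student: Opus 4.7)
The plan is to adapt the proof of Lemma~\ref{totDPCorrect} essentially verbatim, replacing ``sum of $\phi$-values'' by ``maximum of $\phi$-values'' throughout. Concretely, I would first establish by induction on $j$ the following claim: for every triple $(i,j,b')$ with $1 \leq i \leq j \leq m$ and $1 \leq b' \leq b$, the value $r_{part}(i,j,b')$ computed by Algorithm~\ref{alg:homratepart} equals the minimum $r$ for which there exists a block partitioning $B_1,\ldots,B_{b'}$ of $\contBlockS{1}{j}$ with $B_{b'}=\contBlockS{i}{j}$ and $\max_{k=1}^{b'}\phi(B_k)=r$, and equals $\infty$ when no such partitioning exists.

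The base case $j=1$ is immediate: the only feasible partition of $\contBlockS{1}{1}$ is the single block $\contBlockS{1}{1}$, so $r_{part}(1,1,1)=\phi(\contBlockS{1}{1})$ is correct and every other cell is correctly set to $\infty$. For the inductive step, the case $i=1$ is analogous (only a single-block partition is feasible, so $r_{part}(1,j,b')$ is $\phi(\contBlockS{1}{j})$ when $b'=1$ and $\infty$ otherwise). For $i>1$, any feasible partition decomposes into a $(b'-1)$-block partition of $\contBlockS{1}{i-1}$, whose last block has the form $\contBlockS{i'}{i-1}$ for some $1\leq i'\leq i-1$, followed by the fixed block $\contBlockS{i}{j}$. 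The key algebraic step is that, since $\phi(\contBlockS{i}{j})$ does not depend on how $\contBlockS{1}{i-1}$ is partitioned, the outer minimization commutes with the constant term:
\[
\min_{P}\max\bigl(\phi(\contBlockS{i}{j}),\ \max_{k<b'}\phi(B_k)\bigr) \;=\; \max\Bigl(\phi(\contBlockS{i}{j}),\ \min_{P}\max_{k<b'}\phi(B_k)\Bigr).
\]
By the induction hypothesis applied to the strictly smaller index $i-1<j$, the inner quantity $\min_{P}\max_{k<b'}\phi(B_k)$ equals $\min_{1\leq i'\leq i-1} r_{part}(i',i-1,b'-1)$, which matches exactly the recurrence used in Algorithm~\ref{alg:homratepart}.

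To conclude, I would observe that any block partitioning of $A$ into at most $b$ blocks uses exactly $b'$ blocks for some $1\leq b'\leq b$, and its final block has the form $\contBlock{A}{i}{m}$ for some $1\leq i\leq m$. Consequently, the minimum of $\max_{k}\phi(B_k)$ over all such partitionings equals $\min_{b'\leq b,\ i\leq m} r_{part}(i,m,b')$, which is precisely the value returned by the algorithm. The only step requiring more than routine bookkeeping is the min-max commutation displayed above; this is elementary because the value of the last block is fixed once $i$ and $j$ are chosen, so I do not expect any real obstacle beyond carefully mirroring the earlier dynamic-programming argument.
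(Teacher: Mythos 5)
Your proposal is correct and follows essentially the same route as the paper: the paper likewise reduces the argument to the proof of Lemma~\ref{totDPCorrect}, handling only the case $b'>1$, $j>i>1$ by noting that the optimum equals the maximum of $\phi(\contBlockS{i}{j})$ and the best $r_{part}(i',i-1,b'-1)$. Your explicit min-max commutation is a nice touch that the paper leaves implicit, but it is the same argument.
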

\begin{proof}
Observe that Algorithm~\ref{alg:homratepart} is identical to Algorithm~\ref{alg:tothompart}, except for line 9 which handles construction of $r_{part}(i,j,b')$ in the case where $b'>1$ and $j>i>1$. Consequently, the proof of this lemma is identical to that of Lemma~\ref{totDPCorrect}, except for the case $b'>1$ and $j>i>1$, and we omit the other cases.

As with Lemma~\ref{totDPCorrect}, we prove by induction on $j$ that $r_{part}(i,j,b')$ is equal to the minimum $r$ for which there exists a block partitioning $B_1,\ldots ,B_{b'}$ of $\contBlock{A}{1}{j}$ with  $\max_{i'=1}^{b'} \phi(B_{i'}) = r$, such that $B_{b'} = \contBlock{A}{i}{j}$, and that $r_{part}(i,j,b') = \infty$ if no such block partitioning exists.
For the case  $b'>1$ and $j>i>1$,
 we have that an optimal block partitioning consists of a $b'-1$-block partitioning for $\contBlock{A}{1}{(i-1)}$ together with the block $\contBlockS{i}{j}$. 
 Moreover, the last block in the block partitioning of $\contBlock{A}{1}{(i-1)}$ must be $\contBlockS{i'}{(i-1)}$ for some $i' \leq i-1$.
 Thus the value $\max_{i=1}^{b'} \phi(B_i)$
 for an optimal block partitioning $B_1,\ldots ,B_{b'}$ of~$\contBlockS{1}{j}$ is equal to the maximum of $r_{part}(i',i-1,b'-1)$ and $\phi(\contBlockS{i}{j})$, for the choice of $i'$ that minimizes this value. As this is exactly what the algorithm calculates, the claim is correct, and we have completed the proof for this case. \qed
\end{proof}

The following Lemma is clear from the structure of Algorithm~\ref{alg:homratepart} and is stated without proof.

\begin{lemma}
 Given an alignment~$A$ with columns~$A_1,\ldots ,A_m$, an integer~$b$,
 and a value~$\phi(B)$ for each block~$B$ in~$A$, 
 Algorithm~\ref{alg:homratepart} has running time $O(bm^3)$.
\end{lemma}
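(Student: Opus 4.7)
The plan is to bound the running time by a straightforward accounting of the nested loop structure in Algorithm~\ref{alg:homratepart}, using the assumption that each value $\phi(B)$ is supplied as input and can therefore be accessed in $O(1)$ time. First I would observe that the outermost \texttt{for}-loop iterates over $j = 1, \ldots, m$, giving a factor of $O(m)$. Inside this loop, the initialization of $r_{part}(1,j,1)$ and the loop setting $r_{part}(i,j,1) := \infty$ for $i = 2, \ldots, j$ together contribute $O(m)$ work per iteration of $j$, and hence $O(m^2)$ overall; this cost will turn out to be dominated.

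Next I would analyze the double loop over $b' = 2, \ldots, b$ and $i = 2, \ldots, j-1$. This nested pair contributes at most $O(bm)$ iterations per value of $j$, and hence $O(bm^2)$ iterations in total over all $j$. The key observation is the cost of the assignment
\[
r_{part}(i,j,b') := \max\!\bigl(\phi(\contBlockS{i}{j}),\ \min_{1 \leq i' \leq i-1} r_{part}(i',i-1,b'-1)\bigr).
\]
Computing the inner minimum requires scanning at most $i-1 \leq m$ previously-computed table entries, while the lookup of $\phi(\contBlockS{i}{j})$ and the outer $\max$ are $O(1)$ operations. Hence each iteration of the innermost body takes $O(m)$ time.

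Multiplying the $O(bm^2)$ iterations by the $O(m)$ cost per iteration yields $O(bm^3)$, which dominates the $O(m^2)$ initialization cost. Finally, the return statement takes a minimum over the $O(bm)$ entries $r_{part}(i,m,b')$, which is $O(bm)$ and again absorbed. The total running time is therefore $O(bm^3)$, as claimed. I do not foresee any real obstacle here; the only subtlety is to be explicit that the $\phi$-values are treated as precomputed so that their access contributes no extra factor — consistent with how the preceding running-time lemma for Algorithm~\ref{alg:tothompart} is invoked in Lemma~\ref{lem:totDPConstructive} via an additive $f(n,m)m^2$ term.
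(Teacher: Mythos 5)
Your accounting is correct: the $O(bm)$ iterations of the inner double loop per value of $j$, each costing $O(m)$ for the minimization over $i'$, dominate and give $O(bm^3)$. The paper states this lemma without proof as "clear from the structure of the algorithm," and your argument is exactly the straightforward loop analysis it implicitly relies on.
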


As with Algorithm~\ref{alg:tothompart}, we observe that Algorithm~\ref{alg:homratepart} can be made to return a block partitioning using standard backtracking techniques, and that using existing homoplasy algorithms, a certain parameterization of \maxHomRate{} for binary characters is fixed-parameter tractable.


\begin{theorem}
For binary characters,
the \maxHomRate{} problem 
can be solved in time 
$O(h21^h+8^hhnm^4)$,
\markj{where $h$ is is the maximum homoplasy of a block in the block partitioning,}
and is thus
fixed-parameter tractable \markj{with respect to $h$.}
For $s$-state characters, the problem can be solved in time 
$O(hnm^{O(h)}2^{O(h^2s^2)})$. 
\end{theorem}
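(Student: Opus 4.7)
The plan is to mirror the proof given for the \totHomScore{} theorem almost verbatim, replacing Algorithm~\ref{alg:tothompart} by Algorithm~\ref{alg:homratepart} and adjusting the choice of $\phi$. The key observation is that the homoplasy ratio $r(B) = h(B)/|B|$ can be computed in constant additional time once $h(B)$ is known, so the bottleneck is (as before) bounding $h(B)$ via Theorems~\ref{thm:binaryHomFPT} and~\ref{thm:nonBinaryHomXP}.

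First I would state and justify the analogue of Lemma~\ref{lem:totDPConstructive} for the ``max'' version: if $\phi$ can be computed in $f(n,m)$ time per block, then in total time $f(n,m)\cdot m^2 + O(bm^3)$ Algorithm~\ref{alg:homratepart} (together with standard backtracking, as already noted after Lemma~\ref{maxDPCorrect}) produces a block partitioning into at most $b$ blocks that minimizes $\max_i \phi(B_i)$. This is immediate from Lemma~\ref{maxDPCorrect} and the $O(bm^3)$ runtime of the DP, plus the fact that $\phi$ needs to be evaluated on each of the $O(m^2)$ contiguous blocks of $A$.

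Second, given a target integer $h$, define $\phi(B) = r(B) = h(B)/|B|$ if $h(B) \leq h$, and $\phi(B) = \infty$ otherwise. Note that if the sought optimal block partition has maximum per-block homoplasy at most $h$, then the optimum of $\max_i \phi(B_i)$ under this $\phi$ coincides with the optimum of $\max_i r(B_i)$, so minimizing the former solves \maxHomRate{}. Computing $\phi(B)$ for a single block $B$ reduces to testing whether $B$ admits an $h'$-near perfect phylogeny for increasing $h' = 0,1,\dots,h$, which by Theorem~\ref{thm:binaryHomFPT} costs $O(h\,21^h + 8^h h\,n m^2)$ in the binary case, and by Theorem~\ref{thm:nonBinaryHomXP} costs $O(h\,n m^{O(h)} 2^{O(h^2 s^2)})$ in the $s$-state case; division by $|B|$ is $O(1)$ extra.

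Finally, plugging $f(n,m) = O(h\,21^h + 8^h h n m^2)$ (binary) or $f(n,m) = O(h n m^{O(h)} 2^{O(h^2 s^2)})$ ($s$-state) into the analogue of Lemma~\ref{lem:totDPConstructive}, and using $b \leq m$ to absorb the $O(bm^3)$ term, yields $O(h\,21^h + 8^h h n m^4)$ and $O(h n m^{O(h)} 2^{O(h^2 s^2)})$ respectively, exactly as in the \totHomScore{} proof. The one mildly subtle step is the observation that parameterizing by $h = \max_i h(B_i)$ of the \emph{optimal} partition is justified: since we do not know $h$ a priori, we may run the procedure for $h = 0, 1, 2, \dots$ until a finite output is returned, which only blows up the total running time by a factor linear in $h$ and does not affect the FPT status. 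No genuinely new obstacle arises beyond what was handled in the \totHomScore{} theorem.
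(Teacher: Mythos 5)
Your write-up is, in substance, exactly what the paper intends: the paper states this theorem with no explicit proof, relying on the preceding remark that Algorithm~\ref{alg:homratepart} supports the same backtracking and the same plug-in of the near-perfect-phylogeny subroutines as in the \totHomScore{} theorem. Your max-version of Lemma~\ref{lem:totDPConstructive}, the capped choice $\phi(B)=r(B)$ if $h(B)\leq h$ and $\infty$ otherwise, and the runtime bookkeeping (absorbing $O(bm^3)$ via $b\leq m$) are precisely the omitted details, and the argument that the capped optimum coincides with the true \maxHomRate{} optimum \emph{under the promise} that the optimal partition has per-block homoplasy at most $h$ is correct.

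The one place you go beyond the paper is the final step, and there the stopping rule is not sound: iterating $h'=0,1,2,\dots$ and halting at the first \emph{finite} output does not solve \maxHomRate{}. At a cap $h'<h^*$ (with $h^*$ the maximum block homoplasy of the ratio-optimal partition) the DP can already return a finite but strictly suboptimal value --- for instance when the optimal partition contains one long block with homoplasy $h^*>h'$ but small ratio, while the best partition respecting the cap must split it into shorter blocks each of larger ratio. The returned value is non-increasing in $h'$, and ``it became finite'' is not a certificate that it has stabilised. The paper sidesteps this by treating $h$ as given (the bound is parameterised by the maximum homoplasy of a block in the output partitioning), and you should either do the same or use an actual termination test, e.g.\ stop once $h'\geq r'm$ where $r'$ is the current best value, since every block $B$ of an optimal partition then satisfies $h(B)=r(B)|B|\leq r^*m\leq r'm\leq h'$; note this weakens the dependence on the parameter. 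Apart from this, no new obstacle arises.
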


We also observe that by letting~$\phi(B)$ be the homoplasy score of a block $B$,  Algorithm~\ref{alg:tothompart} can be used to solve \maxHomScore.



\subsection{Minimizing number of blocks}

In this section, we consider a variation of the \maxHomScore{} problem, described below.

\fewParBlo\\
\textbf{Given:} an alignment~$A$ and integer~$h$.\\
\textbf{Find:} a block partitioning~$B_1,\ldots ,B_{b}$, of~$A$ into a minimum number of blocks 
such that each block $B_k$ admits an $h$-near perfect phylogeny.

Algorithm~\ref{alg:fewparblock} solves the \fewParBlo{} problem.

\begin{algorithm}[]
 \KwData{Alignment~$A$ with columns~$A_1,\ldots ,A_m$ and an integer~$h$.}
 \KwResult{Block partitioning $B_1,\ldots ,B_b$ of~$A$ into a minimum number of blocks, such that each block~$B_k$ admits an $h$-near perfect phylogeny.}
 
 $b:=1$\;
 $j:=1$\;
 \For{$i=2,\ldots ,m$}{
 	\If{ 
 	$\contBlock{A}{j}{i}$
 	does not admit an $h$-near perfect phylogeny}{
        $B_b := \contBlock{A}{j}{(i-1)}$\;    
        $b:=b+1$\;
        $j:=i$
    }
 }
 $B_b := \contBlock{A}{j}{m}$\;
 \Return{$B_1,\ldots ,B_b$}
 \caption{Algorithm {\sc FewParBlo}($A,h$)}
 \label{alg:fewparblock}
\end{algorithm}
\medskip

\begin{theorem}
Algorithm~\ref{alg:fewparblock} solves the \fewParBlo{} problem.
\end{theorem}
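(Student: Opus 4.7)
The plan is a standard greedy-exchange argument supported by a monotonicity lemma about $h$-near perfect phylogenies, which is the only genuinely content-bearing step; once it is in hand, soundness and minimality follow by routine arguments.

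The monotonicity claim I would prove first is: if a block $B$ admits an $h$-near perfect phylogeny, then so does every contiguous sub-block $B'\subseteq B$. Fix an $h$-NPP $(T,\tau)$ for $B$. For each column $j$, let $\mathrm{ch}_j(\tau)$ denote the number of edges $\{u,v\}$ of $T$ with $\tau(u)_j\neq \tau(v)_j$, which is at least $s_0(A_j)$. The $h$-NPP inequality then rewrites as $\sum_{j\in B}(\mathrm{ch}_j(\tau)-s_0(A_j))\le h$ with all summands non-negative, so restricting to columns of $B'$ keeps the inequality, i.e., $(T,\tau|_{B'})$ witnesses an $h$-NPP for $B'$.

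Soundness is then immediate from inspection of the loop: each block $B_k=\contBlock{A}{j}{(i-1)}$ returned passed the \textbf{if}-check at the previous iteration (and singleton blocks are trivially $h$-NPP, guaranteeing that the algorithm never gets stuck), while the final block $\contBlock{A}{j}{m}$ was accepted by the loop exiting without cutting. For minimality, let $g_0=0<g_1<\dots<g_b=m$ be the right-endpoints of the greedy output and $o_0=0<o_1<\dots<o_{b^*}=m$ those of any optimal partition. Assuming $b>b^*$ for contradiction, I would prove by induction on $k$ that $g_k\ge o_k$ for $k=1,\dots,b^*$. Indeed, since $k\le b^*<b$ implies $g_k<m$, the algorithm must have cut at $g_k$ precisely because $\contBlock{A}{(g_{k-1}+1)}{(g_k+1)}$ fails to admit an $h$-NPP; if we had $g_k<o_k$, then combined with the inductive hypothesis $g_{k-1}\ge o_{k-1}$, this block would be a contiguous sub-block of $O_k=\contBlock{A}{(o_{k-1}+1)}{o_k}$, which admits an $h$-NPP, contradicting monotonicity. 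Hence $g_{b^*}\ge o_{b^*}=m$, forcing $b\le b^*$, a contradiction. The only subtlety in the whole argument is that the $h$-NPP definition is phrased as an aggregate edge-sum rather than column-wise, which is precisely what the monotonicity lemma resolves.
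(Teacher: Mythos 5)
Your proof is correct, and it differs from the paper's in two respects worth noting. For optimality, the paper uses an exchange argument: it takes an optimal partition $F_1,\dots,F_c$ whose first point of disagreement with the greedy output is as late as possible, and shows that the first differing block $F_k$ can neither be longer than $B_k$ (else the greedy algorithm would have extended $B_k$) nor shorter (else one could grow $F_k$ by a column and push the disagreement further right), concluding the two partitions coincide. You instead run a ``greedy stays ahead'' induction, $g_k\ge o_k$, which reaches the same conclusion about the number of blocks without needing to argue that the partitions are identical; both arguments are standard and of comparable length. The more substantive difference is that both proofs hinge on the monotonicity property --- a block admitting an $h$-near perfect phylogeny implies every contiguous sub-block does too --- which the paper invokes implicitly here and only states explicitly (without proof) when justifying the binary-search variant, Algorithm~\ref{alg:homoperblockv2}. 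You actually prove it, via the observation that the homoplasy bound decomposes as $\sum_j(\mathrm{ch}_j(\tau)-s_0(A_j))\le h$ with non-negative summands so that restricting $\tau$ to a subset of columns preserves the inequality; this is the genuinely content-bearing step, your argument for it is sound (including the lower bound $\mathrm{ch}_j(\tau)\ge s_0(A_j)$, which follows by contracting the no-change edges for column $j$), and it makes your write-up more self-contained than the paper's.
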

\begin{proof}
Let~$B_1,\ldots ,B_b$ be a block partitioning produced by Algorithm~\ref{alg:fewparblock}, and let~$F_1,\ldots ,F_c$ be a block partitioning into a minimum number of blocks such that the index of the first column where $B_1,\ldots ,B_b$ and $F_1,\ldots ,F_c$ differ (i.e., the minimum $i$ such that columnn $A_i$ appears in blocks $B_k$ and $F_{k'}$ for some $k \neq k'$) is as large as possible.
Consider the smallest~$k$ for which~$F_k\neq B_k$. It is not possible that~$F_k$ has more columns than~$B_k$ since otherwise the algorithm would have extended~$B_k$ with another column. Hence,~$F_k$ has fewer columns than~$B_k$. However, then we can add a column to~$F_k$ and obtain a solution with the same number of blocks as $F_1,\ldots ,F_c$ where the first columns where it differs from $B_1,\ldots ,B_b$ is one larger. This contradicts the assumption that the first column where $B_1,\ldots ,B_b$ and $F_1,\ldots ,F_c$ differ is as large as possible. Hence, we conclude that $B_1,\ldots ,B_b$ is identical to $F_1,\ldots ,F_c$ and therefore optimal. \qed
\end{proof}

The running time of Algorithm~\ref{alg:fewparblock} is~$m$ times the running time of the $h$-near perfect phylogeny algorithm, hence $O(21^hm+8^hnm^3)$ for binary characters and $O(nm^{O(h)}2^{O(h^2s^2)})$ for general $s$-state characters. 
Consequently, the \fewParBlo{} problem is fixed-parameter tractable \cel{for binary characters}. 


\begin{corollary}
\fewParBlo{} is fixed-parameter tractable when the parameter is~$h$ \cel{for binary characters}.
\end{corollary}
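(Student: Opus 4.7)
The plan is to derive the corollary directly from the preceding theorem (which establishes correctness of Algorithm~\ref{alg:fewparblock}) together with the FPT bound for $h$-near perfect phylogeny on binary alignments given in Theorem~\ref{thm:binaryHomFPT}. Since correctness is already in hand, the only thing to check is that the running time has the shape $f(h)\cdot \mathrm{poly}(n,m)$ required by the definition of fixed-parameter tractability.

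First I would observe that Algorithm~\ref{alg:fewparblock} performs at most $m$ iterations of its outer loop, and in each iteration it does a constant amount of bookkeeping plus a single call to a procedure that decides whether a given block $\contBlock{A}{j}{i}$ admits an $h$-near perfect phylogeny. For binary characters, Theorem~\ref{thm:binaryHomFPT} says that such a test runs in $O(21^h + 8^h n m^2)$ time on an $n \times m'$ alignment with $m' \le m$. Multiplying by the $O(m)$ outer iterations yields an overall bound of $O(21^h m + 8^h n m^3)$, matching the running time already announced in the paragraph just before the corollary.

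Next I would note that this bound factors as $f(h)\cdot g(n,m)$ with $f(h) = O(21^h)$ and $g(n,m) = O(nm^3)$, which is precisely the form required for fixed-parameter tractability with parameter $h$. The corollary then follows. I do not expect any real obstacle: the argument is essentially a bookkeeping combination of two results already in the excerpt (Theorem~\ref{thm:binaryHomFPT} and the correctness theorem for Algorithm~\ref{alg:fewparblock}). The only mild care needed is to make explicit that the exponential dependence is confined to the $h$ parameter and that every other factor is polynomial in the input size $n$ and $m$, so that the FPT definition is literally satisfied.
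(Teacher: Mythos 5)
Your argument is correct and is essentially identical to the paper's own justification: the paper also observes that Algorithm~\ref{alg:fewparblock} makes at most $m$ calls to the $h$-near perfect phylogeny test of Theorem~\ref{thm:binaryHomFPT}, yielding the running time $O(21^hm+8^hnm^3)$ for binary characters, from which the FPT claim follows. The only cosmetic remark is that this bound is a sum rather than a literal product $f(h)\cdot g(n,m)$, but since it is dominated by $O(21^h\cdot nm^3)$ the FPT definition is satisfied exactly as you say.
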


If the number of blocks $b$ is known to be small ($b\leq m / \log m$), the following algorithm may also be useful. It uses binary search to find the location of the recombination site between two blocks.

\begin{algorithm}[]
 \KwData{Alignment~$A$ with columns~$A_1,\ldots ,A_m$ and an integer~$h$.}
 \KwResult{Block partitioning $B_1,\ldots ,B_k$ of~$A$ into a minimum number of blocks, such that each block~$B_i$ admits an $h$-near perfect phylogeny.}
 $\text{end\_previous\_block}:=0$\; 
 \While{$\text{end\_previous\_block}<m$}{
    $l = \text{end\_previous\_block}+1$\;
    $u_2=m$\; 
    \While{$l < u_2$}
    {
        $u_1=\ceil{(l+u_2)/2}$\;
	    \uIf{
	    $\contBlock{A}{\text{(end\_previous\_block}+1)}{u_1}$
	    admits an $h$-near perfect phylogeny}{
            $l:=u_1$\;
            
    	}\Else{
           $u_2:=u_1-1$\;
        }
    }
    $B_b:=\contBlock{A}{(\text{end\_previous\_block}+1)}{u_2}$\;
    $b:=b+1$\;
    $\text{end\_previous\_block}:=u_2$\;
 }
\Return{null\;}
 \caption{Algorithm {\sc FewParBlo2}($A,h,b$)}
 \label{alg:homoperblockv2}
\end{algorithm}



The correctness of this algorithm follows from a similar argument as for the previous algorithm: this algorithm uses binary search instead of linear search to find the longest possible block, using $l$  and $u_2$ as lower and upper bounds on the last column of that block. \textcolor{black}{The only extra observation needed is the following. If $\contBlock{A}{i}{j}$ admits an $h$-near perfect phylogeny, then so does $\contBlock{A}{i}{(j-1)}$; and, similarly, if $\contBlock{A}{i}{j}$ does not admit an $h$-near perfect phylogeny, then neither does $\contBlock{A}{i}{(j+1)}$.}

Note that the running time of the part within the while loop is dominated by the function checking whether an $h$-near perfect phylogeny exists. As this while loop performs a binary search on a list of length $m-l$, its contents are executed $\log_2(m-l)<\log_2(m)$ times. Finally, this while loop is contained in a for loop which runs $b$ times at most. Therefore the running time is bounded by $b\log_2(m)$ times the running time of the $h$-near perfect phylogeny algorithm, which theoretically gives a factor $m/(b\log_2 m)$ improvement over the previous algorithm. An important note here is that actual running time may also depend on implementation of the $h$-near perfect phylogeny algorithm. If the worst case for this algorithm is only attained in NO cases, the improvement of the second algorithm may be less than expected as the first algorithm encounters exactly one NO case per block, and the second may encounter more such cases.

\section{Combining blocks to non-contiguous blocks}
\label{sec:combine}

We now consider the \textsc{Block Combining to $p$-Multiblocks} problem and first show this problem to be NP-complete for~$p\geq 3$ by reduction from the \textsc{Bounded Coloring} problem: given a graph~$G$, does there exist a coloring of the vertices of~$G$ with at most~$c$ colors such that each color is used at most~$p$ times and adjacent vertices always get different colors? This problem is NP-complete for each fixed~$p\geq 3$~\citep{boundedcolorings}.

\medskip

\begin{theorem}
For every integer $p \geq 3$, the \textsc{Block Combining to $p$-Multiblocks} problem is NP-complete for~$h=0$.
\end{theorem}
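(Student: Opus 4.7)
The plan is a polynomial-time reduction from \textsc{Bounded Coloring}, which is NP-complete for every fixed $p\ge 3$. Membership in NP is immediate, since one can guess the multiblock partition and verify in polynomial time that each part is a $p$-multiblock of homoplasy $0$ using the four-gamete test.

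Given an instance $(G=(V,E),k,p)$ of \textsc{Bounded Coloring} with $V=\{v_1,\dots,v_n\}$, I build a binary alignment $A$ on $4+4|E|$ rows, divided into one block $R_B$ of four ``barrier rows'' and, for every edge $e\in E$, a private block $R_e$ of four ``edge rows''. Every column has $1$-entries inside a single such row block and $0$ elsewhere. Introduce a \emph{barrier column} $d$ with pattern $(1,1,0,0)$ on $R_B$; for every vertex $v_j$, a \emph{barrier-defeating column} $b_j$ with pattern $(1,0,1,0)$ on $R_B$; and for every edge $e=\{v_i,v_j\}$, two \emph{edge-gadget columns} $c_e^1,c_e^2$ with patterns $(1,1,0,0)$ and $(1,0,1,0)$ on $R_e$. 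Set $B_j$ to consist of $b_j$ together with, for every edge $e$ incident to $v_j$, the appropriate column $c_e^?$; set $D_j$ to be a single copy of $d$. The input block partition is $\cB=B_1,D_1,B_2,D_2,\dots,D_{n-1},B_n$, of length $2n-1$.

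Two columns whose $1$-entries lie in different row blocks have disjoint support and are therefore compatible. A straightforward four-gamete check shows that the only incompatible pairs are $(d,b_j)$ inside $R_B$ and $(c_e^1,c_e^2)$ inside $R_e$. Consequently, each $B_j$ and each $D_j$ has homoplasy $0$; consecutive blocks of $\cB$ are non-mergeable because $b_j$ and $d$ collide in $R_B$; any union of $D_j$'s is still homoplasy $0$; and a union of vertex blocks $\{B_{i_1},\dots,B_{i_t}\}$ has homoplasy $0$ if and only if $\{v_{i_1},\dots,v_{i_t}\}$ is an independent set of $G$.

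Setting the target $c:=k+\lceil (n-1)/p\rceil$, I claim that $G$ admits a bounded coloring with $k$ colors iff the constructed instance admits a non-contiguous block partitioning into at most $c$ $p$-multiblocks of total homoplasy $0$. For the forward direction, take each of the $k$ color classes as one vertex multiblock (size at most $p$, independent, hence homoplasy $0$ and built from at most $p$ contiguous parts of $\cB$) and pack the $n-1$ barriers into $\lceil (n-1)/p\rceil$ further multiblocks. Conversely, no multiblock can mix a $B_j$ with a $D_{j'}$, because $b_j$ and $d$ are incompatible, so at least $\lceil (n-1)/p\rceil$ multiblocks are forced for the barriers, leaving at most $k$ for the vertex blocks; by the independence characterisation above these yield a proper coloring of $G$ using at most $k$ colors with each class of size at most $p$. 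The main obstacle is calibrating the gadgets so that the \emph{only} four-gamete patterns that arise are the intended ones while still guaranteeing that consecutive input blocks are non-mergeable; segregating the $1$-entries of different column types into disjoint row blocks handles both requirements simultaneously.
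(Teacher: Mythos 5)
Your proof is correct, and at its core it follows the same strategy as the paper's: both reduce from \textsc{Bounded Coloring}, create one input block per vertex of $G$, and arrange the columns so that a set of vertex blocks merges without homoplasy precisely when the corresponding vertices form an independent set, so that a partition into few $p$-multiblocks of total homoplasy $0$ is exactly a bounded colouring. The gadgets differ: the paper builds $n$ blocks of $n$ columns over $3n+1$ rows and encodes adjacency of $v_i,v_j$ by whether the supports of the $i$-th columns of $B^i$ and $B^j$ are nested (compatible) or crossing (incompatible), whereas you give each edge a private four-row gadget; your version is leaner and makes the compatibility analysis nearly trivial, since all column pairs have disjoint supports except the two intended collision types. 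The one substantive difference is your insertion of the barrier blocks $D_j$ and the corresponding target $c=k+\lceil (n-1)/p\rceil$. The problem definition promises that consecutive input blocks are not mergeable; your barriers enforce this, whereas the paper's construction does not (in its worked example, $B^1$ and $B^2$ correspond to the non-adjacent vertices $v_1,v_2$ and are therefore mergeable), so read literally your reduction produces a valid instance of the stated problem while the paper's does not. Your accounting for the barriers in both directions---they can never share a multiblock with a vertex block because $d$ and $b_j$ are incompatible, and $\lceil (n-1)/p\rceil$ multiblocks are both necessary and sufficient to absorb them---is sound, so the extra blocks do not disturb the equivalence with $k$-colourability.
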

\begin{proof}
Given an instance of the \textsc{Bounded Coloring} problem, that is an integer $p \geq 3$ and a graph $G = (V,E)$ with $n = |V|$, we build the following blocks $B^j$ of $3n+1$ aligned sequences $S_0, S_1, \ldots, S_{3n}$ of length $n$ for each vertex $v_j$ of $G$, illustrated in Figure~\ref{fig:reductionFromColoring}:
\begin{itemize}
\item $B^j_{0,k} = 0$ for all $k \in \{1,\ldots,n\}$ (that is, the first row in $B^j$ consists of all $0$'s);
\item 
 $B^j_{3j-2,j} = B^j_{3j-1,j} = 1$ and $B^j_{k,j} = 0$ for all $k$ in $\{1,\ldots ,3j-3\} \cup \{3j,\ldots ,n\}$;
\item for all $i \in [1,\ldots ,n]$ distinct from $j$ such that $v_j$ adjacent with $v_i$, $B^j_{3i-1,i} = B^j_{3i,i} = 1$ and $B^j_{k,i} = 0$ for all $k \in \{1,\ldots ,3i-2\} \cup \{3i+1, \ldots ,n\}$;
\item for all $i \in \{1,\ldots ,n\}$ distinct from $j$ such that $v_j$ not adjacent with $v_i$, $B^j_{3i-2,i} = B^j_{3i-1,i} = B^j_{3i,i} = 1$ and $B^j_{k,i} = 0$ for all $k \in \{1, \ldots ,3i-3\} \cup \{3i+1, \ldots , n\}$.
\end{itemize}

\begin{figure}
\centering
\begin{tabular}{lcccc}
~ & $B^1$ & $B^2$ & $B^3$ & $B^4$\\
$S_0$ & 0000 & 0000 & 0000 & 0000\\
$S_1$ & 1000 & 1000 & 0000 & 0000\\
$S_2$ & 1000 & 1000 & 1000 & 1000\\
$S_3$ & 0000 & 1000 & 1000 & 1000\\
$S_4$ & 0100 & 0100 & 0100 & 0100\\
$S_5$ & 0100 & 0100 & 0100 & 0100\\
$S_6$ & 0100 & 0000 & 0100 & 0100\\
$S_7$ & 0000 & 0010 & 0010 & 0010\\
$S_8$ & 0010 & 0010 & 0010 & 0010\\
$S_9$ & 0010 & 0010 & 0000 & 0010\\
$S_{10}$ & 0000 & 0001 & 0001 & 0001\\
$S_{11}$ & 0001 & 0001 & 0001 & 0001\\
$S_{12}$ & 0001 & 0001 & 0001 & 0000\\
\end{tabular}
\caption{\label{fig:reductionFromColoring} An instance of the \textsc{Block Combining  to $p$-Multiblocks} problem built from an instance $G = \{\{v_1,$ $v_2,$ $v_3,$ $v_4\},$ $\{v_1v_3,$ $v_1v_4\}\}$ of the \textsc{Bounded Coloring} problem with at most $p$ vertices per color.}
\end{figure}


Recall from the four-gamete test that an alignment $A$ has homoplasy 0 if and only if no two characters are incompatible, where characters $j,k$ are incompatible if there 
exist rows $i_1, i_2, i_3, i_4$ such that
$A_{i_1,j}  = 1= A_{i_1,k}$, $A_{i_2,j} = 1 \neq 0= A_{i_2,k}$, $A_{i_3,j} = 0 \neq 1= A_{i_2,k}$ and 
$A_{i_4,j} = 0= A_{i_4,k}$.

First note that by construction, the only characters which may be
equal to 1 in each block $B_j$ are the $i$-th character of
the block on sequences $S_{3i-2}$, $S_{3i-1}$ and $3i$, all other
characters are equal to 0. Therefore, there is at most one character
equal to 1 in each line of block $B_j$, therefore $B_j$ contains
no incompatible characters, so the block $B_j$ has homoplasy 0,
therefore we have built a proper instance of the
\textsc{Block Combining to $p$-multiblocks} problem \cel{for $h=0$}.

Now, suppose that $G$ is a positive instance of the 
\textsc{Bounded Coloring} problem using at most $c$ colors used at most $p$ times, then there exist $c$ independent
sets $I_{1}, \ldots, I_{c}$ in $G$. We claim that the corresponding
$p$-multiblocks have homoplasy 0.

To prove this claim, let us consider 2 characters $c$ and $c'$
in two distinct blocks $B$ and $B'$ in the same $p$-multiblock,
such that $c$ is the $i$-th character of block $B$ and
$c'$ is the $i'$-th character of block $B'$. If $i \neq i'$,
the two characters are compatible because the 1s in those two
characters do not appear in the same line. Otherwise,
by construction the only
1s may appear in the sequences $S_{3i-2}$, $S_{3i-1}$ and $S_{3i}$
for these 2 characters. The vertices $v$ and $v'$ corresponding
to blocks $B$ and $B'$ are not adjacent because they are part
of an independent set, therefore by construction the sequences
$S_{3i-2}$, $S_{3i-1}$ and $S_{3i}$ all contain 1 for one of these
two characters, therefore both characters are compatible. So in all cases each $p$-multiblock has homoplasy 0.

To prove the reverse, let us assume that it is possible to
merge the blocks of the instance of the \textsc{Block Combining to $p$-Multiblocks}
we have built so that each $p$-multiblock has homoplasy 0.
For each $p$-multiblock $P$, let us consider
two vertices $v_i$ and $v_{i'}$ of $G$ corresponding
to two blocks $B_i$ and $B_{i'}$ of $P$.
Suppose by contradiction that $v_i$ and $v_{i'}$ are adjacent.
Then we have $B^{i}_{3i-1,i}=1 = B^{i'}_{3i-1,i}$, $B^{i}_{3i-2,i} = 1 \neq 0 = B^{i'}_{3i-2,i}$, $B^{i}_{3i,i} = 0 \neq 1 = B^{i'}_{3i,i}$ and $B^{i}_{0,i}=0=B^{i'}_{0,i}$.
Thus the $i$-th character of block $B^{i}$ and of block $B^{i'}$ are
incompatible, and therefore the homoplasy cost of the block combination
containing $B^{i}$ and $B^{i'}$ is strictly greater than 0; a contradiction.
Therefore, no pair of vertices corresponding to the blocks of the block
combination are adjacent, thus each set of vertices corresponding to
each of the $c$ block combinations is an independent set, so
$G$ has a $c$-coloring.

Thus, we have built an instance of the \textsc{Block Combining to $p$-Multibocks} problem where a merge into $c$ non-contiguous blocks (each containing at most $p$ contiguous blocks) with total homoplasy~0 for each of these $c$ blocks is possible if and only if there is a $c$-coloring of~$G$ where each color is used at most $p$ times, therefore \textsc{Block Combining to $p$-Multiblocks} is NP-hard.

Given a merge into $c$ non-contiguous blocks, it is easy to check if all the characters of each of these blocks are compatible, and that each non-contiguous block contains at most $p$ contiguous blocks, so \textsc{Block Combining to $p$-Multiblocks} is NP-complete. \qed
\end{proof}




We now focus on the case~$p=2$, i.e., the problem \textsc{Block Combining to 2-Multiblocks}. 


\begin{algorithm}[]
 \KwData{Alignment~$A$, a block partitioning~$\cB$ of~$A$ with total homoplasy~$h$ such that $B_j$ and $B_{j+1}$ are not mergeable for any $j$, and integer~$c$.}
 \KwResult{At most~$c$ 2-multiblocks, each of which is a union of blocks from~$\cB$, such that the total homoplasy of the 2-multiblocks is~$h$ (if such a solution exists).}
 Construct a graph~$G=(V,E)$ with
 a vertex for each block and
 an edge~$\{B_j,B_k\}$ if~$B_j$ and~$B_k$ are mergeable;\\
 Find a maximum matching~$M$ in~$G$;\\
 \For{each edge $\{B_j,B_k\}\in M$} {
 merge~$B_j$ and~$B_k$ into a 2-multiblock;
 }
 \For{each vertex~$B_j$ that is not covered by~$M$} {make~$B_j$ a 2-multiblock consisting of a single continuous part;}
 \uIf{the number of obtained 2-multiblocks is at most~$c$} {\Return{the obtained 2-multiblocks;}} 
 \Else{\Return{null;}}
 \caption{Algorithm {\sc BloCo2Mul}($A,\cB,h,c$)}
 \label{alg:2multiblocks}
\end{algorithm}

\medskip

We now argue correctness of the algorithm. Suppose that there exist~$c'$ 2-multiblocks with total homoplasy~$h$ and such that each of the 2-multiblocks is a union of blocks from~$\cB$. Each 2-multiblock that consists of exactly two contiguous parts corresponds to an edge of~$G$. Let~$M'$ be the set of all such edges and observe that they form a matching in~$G$. Each 2-multiblock that consists of a single contiguous part corresponds to a vertex of~$G$. Moreover, each vertex of~$G$ that is not covered by~$M'$ corresponds to such a 2-multiblock. Hence, $c'=(|V|-2|M'|)  + |M'|= |V|-|M'|$. It follows that we get the smallest possible number~$c'$ of 2-multiblocks by choosing a maximum cardinality matching~$M'$, which Algorithm~\ref{alg:2multiblocks} does.

The following theorem follows directly, since it can be checked in polynomial time whether an alignment has homoplasy~0~\citep{perfectphylogeny,fastperfectphylogeny} and for binary states this is fixed-parameter tractable with parameter~$h$~\citep{nearperfect}.

\medskip

\begin{theorem}
\textsc{Block Combining to $p$-Multiblocks} can be solved in polynomial time for~$p=2$ and~$h=0$ and it is fixed-parameter tractable with parameter~$h$ for binary characters and~$p=2$.
\end{theorem}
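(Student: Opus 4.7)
The plan is to verify that Algorithm~\ref{alg:2multiblocks} runs within the claimed time bounds, given that its correctness has just been argued in the text preceding the theorem. The algorithm consists of three computational steps: constructing the graph $G$, computing a maximum matching in $G$, and reading off the resulting $2$-multiblocks. The last two steps are clearly polynomial (maximum matching in a general graph can be done in polynomial time, e.g., by Edmonds' blossom algorithm). Therefore the whole analysis reduces to bounding the cost of deciding, for each pair of input blocks $B_j, B_k$, whether $B_j$ and $B_k$ are mergeable---i.e., whether $h(B_j \cup B_k) = h(B_j) + h(B_k)$. The number of such pairs is at most $\binom{|\cB|}{2} = O(m^2)$, so it suffices to bound the cost of one mergeability test.

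For the case $p=2$ and $h=0$, I would first observe that if the total homoplasy of $\cB$ is $0$, then every block has homoplasy $0$. Hence two blocks $B_j$ and $B_k$ are mergeable exactly when the alignment obtained by taking the union of their columns also admits a perfect phylogeny. Using the polynomial-time algorithms for recognizing alignments admitting a perfect phylogeny~\citep{perfectphylogeny,fastperfectphylogeny}, each pairwise test runs in polynomial time in the size of $A$. Combining this with the polynomial cost of the matching step yields the first half of the theorem.

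For the FPT claim with parameter $h$ for binary characters, I would instead invoke Theorem~\ref{thm:binaryHomFPT}: given any block (or union of blocks) $B$ of $A$ with at most $m$ columns, we can decide in time $O(21^h + 8^h n m^2)$ whether $B$ admits an $h$-near perfect phylogeny. Since the total homoplasy of $\cB$ is $h$, each individual $h(B_j)$ is bounded by $h$, and we only need to test whether $h(B_j \cup B_k) \leq h(B_j) + h(B_k) \leq h$; this is still an $h$-bounded homoplasy test and thus fits within the FPT regime. Multiplying the per-pair cost by the $O(m^2)$ pairs and adding the polynomial matching cost gives a total running time that is a function of $h$ times a polynomial in $n$ and $m$, establishing fixed-parameter tractability.

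The only subtle point---and the step I expect to take the most care---is verifying that the testing of $h(B_j \cup B_k)$ can be bounded by a function of the overall homoplasy parameter $h$, rather than by the (potentially larger) sum $h(B_j) + h(B_k)$ of a hypothetical merge that blows up the homoplasy: this follows because any merge that causes $h(B_j \cup B_k) > h(B_j) + h(B_k)$ need not be precisely computed, only \emph{rejected}, and the rejection can be done by running the decision procedure of Theorem~\ref{thm:binaryHomFPT} with threshold $h(B_j) + h(B_k) \leq h$ and reporting failure if no $(h(B_j)+h(B_k))$-near perfect phylogeny exists for $B_j \cup B_k$. Hence the entire algorithm is FPT in $h$, completing the proof.
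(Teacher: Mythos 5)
Your proposal is correct and follows essentially the same route as the paper: the paper's own proof is a one-line remark that the theorem ``follows directly'' from the correctness of Algorithm~\ref{alg:2multiblocks}, the polynomial-time perfect-phylogeny test for the $h=0$ case, and Theorem~\ref{thm:binaryHomFPT} for the binary FPT case. You simply spell out the details the paper leaves implicit (the $O(m^2)$ pairwise mergeability tests to build $G$, the polynomial matching step, and the observation that each test only needs the threshold $h(B_j)+h(B_k)\leq h$), all of which are sound.
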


\medskip

We now continue to the general \textsc{Block Combining to $p$-Multiblocks} problem with~$p\geq 2$. 
This problem can be solved by Algorithm~\ref{alg:independentset}. We say that a collection of blocks~$\cB'\subseteq\cB$ is \emph{mergeable} if the total homoplasy of~$\cB'$ is equal to the homoplasy of the alignment obtained by combining all blocks from~$\cB'$. Note that, in general, it may happen that a set of blocks is not mergeable even if the blocks in the set are pairwise mergeable.

\begin{algorithm}[]
 \KwData{
 Alignment~$A$, a block partitioning~$\cB$ of~$A$ with total homoplasy~$h$ such that $B_j$ and $B_{j+1}$ are not mergeable for any $j$, and integer~$c$.}
 \KwResult{At most~$c$ $p$-multiblocks, each of which is a union of blocks from~$\cB$, such that the total homoplasy of the $p$-multiblocks is~$h$.}
 Construct a graph~$G=(V,E)$ with a vertex for each set~$\cB'\subseteq\cB$ with~$|\cB'|\leq p$ that is mergeable and an edge~$\{\cB',\cB''\}$ if~$\cB'\cap\cB''\neq\emptyset$\;
 Give each vertex~$\cB'$ a weight equal to~$|\cB'|-1$ \cel{(i.e., the number of blocks in $\cB'$ minus one)}\;
 Find a maximum weight independent set~$I$ in~$G$\;
\For{each input block $B\in\cB$ not in any element of $I$}{add the vertex $\{B\}$ to $I$\;}
\For{each vertex~$\cB'\in I$} {
 	Merge the blocks in~$\cB'$ into a $p$-multiblock\;
 }
 \uIf{the number of obtained $p$-multiblocks is at most~$c$}{\Return{the obtained $p$-multiblocks}} 
 \Else{\Return{null}}
 \caption{Algorithm {\sc BloCopMul}($A,\cB,h,c$)}
 \label{alg:independentset}
\end{algorithm}

Correctness of Algorithm~\ref{alg:independentset} follows from the following argument. First we argue that each $p$-multiblock partitioning of $\cB$ with total homoplasy $h$ and $c$ multiblocks corresponds to an independent set in $G$ containing all input blocks $B\in\cB$ of weight $|\cB|-c$. Then we show that each independent set $I$ of $G$ with weight $|\cB|-c$ gives a $p$-multiblock partitioning of $\cB$ in $c$ multiblocks. We now prove this in detail.

\medskip

\begin{lemma}
Algorithm~\ref{alg:independentset} is correct.
\end{lemma}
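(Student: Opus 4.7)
The plan is to establish a correspondence between valid output partitions (at most $c$ mergeable $p$-multiblocks with total homoplasy $h$) and independent sets in $G$ that cover every block of $\cB$ and have weight at least $|\cB|-c$. The underlying accounting identity is that any collection of $k$ pairwise disjoint vertices $\cB'_1,\ldots,\cB'_k$ of $G$ whose union is all of $\cB$ yields a partition into exactly $k$ multiblocks and has total weight $\sum_i (|\cB'_i|-1) = |\cB|-k$. Maximising the weight of such a covering independent set is therefore equivalent to minimising the number of multiblocks.

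First I would handle the forward direction. Suppose the instance admits a valid solution $P_1,\ldots,P_{c'}$ with $c'\leq c$, where each $P_i$ is the union of the blocks in some $\cB_i \subseteq \cB$. I would argue that each $\cB_i$ is a vertex of $G$. \emph{Mergeability} of each $P_i$ follows from the hypothesis that the total homoplasy equals $\sum_{B\in\cB} h(B)$ combined with superadditivity of homoplasy on disjoint column sets: since $PS(T,A)-s_0(A) = \sum_j(PS(T,A_j)-s_0(A_j))$ for any fixed tree $T$, we have $h(X\cup Y)\geq h(X)+h(Y)$ for disjoint column sets. For the \emph{size bound}, I would show that $\cB_i$ cannot contain two consecutive blocks $B_j,B_{j+1}$: by the input assumption these are not mergeable, so $h(B_j\cup B_{j+1}) > h(B_j)+h(B_{j+1})$, and applying superadditivity to the remainder of $\cB_i$ then gives $h(P_i) > \sum_{B\in\cB_i} h(B)$, contradicting mergeability of $P_i$. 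Hence each block in $\cB_i$ is its own contiguous part in $P_i$, which forces $|\cB_i|\leq p$. The $\cB_i$ are pairwise disjoint and so form an independent set in $G$ of weight $|\cB|-c'$.

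Next I would verify the reverse direction together with the algorithm's output. Let $I$ be the maximum-weight independent set found and let $I'$ be $I$ augmented with $\{B\}$ for every block $B$ not covered by $I$. Each singleton is a vertex of $G$ (trivially mergeable, weight $0$) and, since $B$ is uncovered, it shares no element with any vertex in $I$, so $I'$ is still independent and has the same weight as $I$. The vertices of $I'$ partition $\cB$; by the non-mergeability of consecutive pairs and the mergeability of each chosen vertex, no vertex of $I'$ contains adjacent blocks, so each vertex induces a genuine $p$-multiblock with exactly $|\cB'|$ contiguous parts, and mergeability at each vertex yields total homoplasy $\sum_{B\in\cB} h(B)=h$. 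Finally, the number of multiblocks equals $|I'|=|\cB|-\mathrm{weight}(I)$, so the algorithm's acceptance test is equivalent to $\mathrm{weight}(I)\geq |\cB|-c$, which by the forward direction holds if and only if a valid solution exists.

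The hard part will be the size half of the forward direction, namely ruling out that any multiblock in a valid solution contains two consecutive blocks of $\cB$. This step is what tightly couples the graph $G$---whose vertices are capped at size $p$---to the $p$-multiblock constraint in the problem, and it relies on both superadditivity of homoplasy (which I would establish as a preliminary lemma) and the input assumption on non-mergeability of consecutive pairs. The remaining pieces are essentially bookkeeping built on the weight/size identity laid out at the start.
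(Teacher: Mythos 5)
Your proof is correct and follows essentially the same route as the paper's: both establish the correspondence between valid solutions and covering independent sets via superadditivity of homoplasy over disjoint column sets, the weight identity $\sum_i(|\cB'_i|-1)=|\cB|-k$, and the augmentation of the maximum-weight independent set with singletons for uncovered blocks. If anything, you are more careful than the paper at one point: you explicitly rule out that a multiblock in a valid solution contains two consecutive blocks of $\cB$ (using the non-mergeability hypothesis), which is what guarantees $|\cB_i|\leq p$ and hence that each multiblock really is a vertex of $G$ --- a step the paper's proof leaves implicit.
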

\begin{proof}
Suppose we have a $p$-multiblock partitioning $\{\cB'_i\}_{i\in[c]}$ of $\cB$ with total homoplasy $h$ into $c$ multiblocks, then each multiblock must consist of a mergeable set of input blocks. Indeed, the total homoplasy $h(\cB'_i)$ of each multiblock $\cB'_i=\{B_{i_1},\dots,B_{i_{j_i}}\}$ is at least the sum $\sum_{k=1}^{j_i}h(B_{i_k})$ of the homoplasies of the contained blocks. So, if the total homoplasy of all multiblocks $\sum_{i=1}^ch(\cB'_i)$ is at most $h$, and the total homoplasy $\sum_{B\in\cB}h(B)$ of $\cB$ is also $h$, none of the multiblocks may have strictly greater total homoplasy than the sum of the homoplasy of the contained blocks, i.e., $h(\cB'_i)>\sum_{k=1}^{i_j}h(B_{i_k})$ is not allowed. Hence $h(\cB'_i)=\sum_{k=1}^{i_j}h(B_{i_k})$ for each multiblock $\cB'_i$, or in other words, each multiblock~$\cB'_i$ is mergeable. 

This means that every multiblock corresponds to a vertex of $G$ and because the multiblocks form a partition of $\cB$, there are no edges between these vertices. Hence the nodes corresponding to the chosen multiblocks form an independent set in $G$. The weight of this independent set is $w(I)=\sum_{i=1}^c |\cB'_i|-1=|\cB|-c$.

Now define $\cB_I:=\cup_{\cB'\in I} \cB'$ and suppose we find an independent set $I$ with $\cB_I \neq \cB$, then $\cB_I$ must be a strict subset of $\cB$ as each element of $I$ is a subset of $\cB$. Let $B\in \cB$ be an input block not chosen for any element of the independent set (i.e., $B\not \in\cB_I$). Then $\{B\}$ is a vertex of $G$ because the total homoplasy of $\{B\}$ is trivially equal to the homoplasy of $B$. Furthermore, there is no edge $\{\{B\},\cB'\}$ for any element $\cB'\in I$. This means that adding $\{B\}$ to $I$ gives a new independent set, and its weight is $w(I)+w(\{B\})=w(I)+|\{B\}|-1=w(I)$. As the algorithm uses the same procedure to add elements to an independent set until each block is in one of the elements, we may assume that each input block is in at least one element of the independent set $I$, i.e., $\cB_I = \cB$.

Because there is an edge between two vertices in $G$ exactly if they are not disjoint, an independent set of $G$ corresponds to a partition of (a subset of) $\cB$, so each input block is in at most one element of the independent set. We conclude that each block of the input is in exactly one element of an independent set produced by the algorithm. 

Now we look at the weight of such an independent set in $G$. Let $I=\{\cB'_i\}_{i\in[k]}$ be an independent set in $G$, then the weight $w(I)$ of $I$ is $w(I)=\sum_{i=1}^k|\cB_i'|-1= |\cB|-k$. This means that if the weight of $I$ is $|\cB|-c$, then the number of multiblocks in the partitioning corresponding to $I$ is $c$. Hence, there is a solution with at most $c$ $p$-multiblocks if and only if there is an independent set in~$G$ of weight at least $|B|-c$. Because the algorithm finds a maximum weight independent set in~$G$, it outputs a solution with the minimum number of $p$-multiblocks. \qed
\end{proof}

\section{Experiments}\label{sec:experiments}

We have implemented the four optimization models \maxHomScore, \totHomScore, {\maxHomRate} and \totHomRate{} in the open-source software package \textsc{CutAl}, which can be downloaded from \url{https://github.com/celinescornavacca/CUTAL}. This program does not use the near-perfect phylogeny algorithms from~\citep{nearperfect} because no implementation of these algorithms is available and they are not expected to run efficiently for larger data sets. Instead, for the construction of phylogenetic trees we implemented a brute-force algorithm for small data sets and use the heuristic parsimony solver \textsc{Parsimonator} (\url{https://sco.h-its.org/exelixis/web/software/parsimonator/index.html}) for larger data sets. \textsc{Parsimonator} is used by the high-performance software RAxML-light (and more recently, ExaML~\citep{kozlov2015examl}) to warm-start the search for maximum likelihood trees.

We performed 2 experiments,
whose main goal was to act as a first ``sanity check'' to pick up potential problems of the inferences produced by the 4 models---under ideal conditions---rather than to assess their behavior on realistic data.
The first experiment concerned alignments with 2 blocks, the second experiment concerned multiple blocks (ranging from 3 to 6). Full output of the experiments can be downloaded from the \textsc{CutAl} GitHub page. A detailed description of the experimental protocol, and some brief information concerning running times, can be found in the appendix. To enhance readability we describe here only the overall structure of the experiments.\\
\\
The high-level, informal idea is to generate an alignment consisting of 400 nucleotides and $k \in \{2,3,4,5,6\}$ blocks, such that the locations of the breakpoints, \steven{denoted \emph{breakpoints}, are specified as a parameter of the experiment (in the case of the 2-block experiment) or are chosen randomly (in the case of the multiple-block experiment).}
For each block, a random phylogenetic tree is chosen, where all the trees have $t \in \{5,10,20,50\}$ taxa and all branches of the tree have length $b\ell \in \{0.001, 0.01, 0.1\}$. These branch lengths have been chosen to mimic the three more common branch length categories in the OrthoMaM database \citep{orthomam}. 
We use \textsc{Dawg}  \citep{cartwright2005dna} to simulate
a DNA alignment corresponding to these parameters. The alignment is fed to \textsc{CutAl} and optimal solutions
\steven{with $k$ blocks}
under the four models \maxHomScore, \totHomScore, {\maxHomRate} and \totHomRate{} are computed. (\textsc{CutAl} computes the optima for all four models in a single execution, so the four models are always applied to exactly the same input data.)
For each model, we assess how far the breakpoints chosen by \textsc{CutAl} are from the experimentally generated breakpoints. To measure this, we use the \emph{breakpoint error}, defined as the number of informative sites separating the inferred breakpoints from the correct ones. 
To decrease the impact of randomness, we run each ($t, b\ell, breakpoints, k$) 
combination 20 times (i.e., we obtain 20 `replicates'), taking the average and standard deviation of the breakpoint errors.\\
\\
\steven{As stated above, t}he only difference between the 2-block and the multiple-block experiment is the way \emph{breakpoints} are dealt with: in the 2-block experiment ($k=2$) the exact \emph{location} of the single internal breakpoint is controlled experimentally, so a given combination of experimental parameters---corresponding to a single row of Table \ref{tab:2block}---is more accurately summarized as ($t, b\ell, location$). In the multiple-block experiment only the \emph{number} of blocks is controlled experimentally, and the breakpoints themselves are selected randomly: so a given combination of experimental parameters, corresponding to a single row of Table \ref{tab:multiblock}, is in this case actually ($t, b\ell, k$).\\
\\
The results are summarized in Table \ref{tab:2block} (for the 2-block experiment) and Table \ref{tab:multiblock} (for the multiple-block experiment.) 
For each parameter combination, we report
average and standard deviation of the breakpoint error (ranging over the 20 replicates). In each row of the tables, the smallest average (ranging across the four optimization models) is shown in bold. At the foot of each table we provide average-of-averages and average-of-standard-deviations.\\



\subsection{Analysis of experiments}

Both experiments indicate that, overall,
\totHomScore{} is the best algorithm, then \maxHomRate, then \maxHomScore, and finally \totHomRate{}. The average-of-averages 
shown at the foot of each table emphasize this. A number of observations can be made:
\begin{itemize}

\item Across both experiments, and across most parameter combinations, \totHomScore{} achieves an average error across the 20 replicates of (much) less than 1.0. This means that \totHomScore{} is almost always inferring breakpoint locations that are 
within one informative site of the their correct locations.
In the 2-block experiment, the only exception to this is several of the $(t, b\ell) = (5, 0.1)$ parameter combinations, which have slightly higher averages and standard deviations. (These combinations also disrupt the other optimization models.) We believe this is because trees with long branches and few taxa produce very little phylogenetic signal
enabling the separation of regions generated by different trees.
In the multiple-block experiment, the $(5, 0.1)$ combinations behave similarly, but there $(50, 0.001)$ combinations also cause the average to rise marginally above 1.0.


\item In both experiments, many of the parameter combinations cause both \maxHomScore{} and \totHomRate{} to suffer very large breakpoint errors.  Interestingly, in both experiments these huge errors only occur when branch lengths are long $(0.1)$.
We reflect below on the reasons for this, which are different for the two models.


\item Overall, the performance of \maxHomRate{} seems somewhat correlated with \totHomScore{}. Unlike the other two optimization models, 
neither of these models produces any spectacularly large breakpoint errors.

\item In both experiments, for extremely short branch lengths $(0.001)$ all models exhibit similar performance. This is 
connected to the fact that, for such extremely short branch lengths, an alignment typically contains very few informative sites.
To give a concrete example: in the 2-block experiment, 
for $b\ell=0.001$, an alignment generated with $t=5$
might contain only 1 informative site per block, rising to at most 15-20 for $t=50$. 
Given the definition of breakpoint error this means that there are many sites that do not contribute to the error, so the breakpoint error will be limited in magnitude irrespective of the exact optimisation criterion being used.


   
   
\item 
Close observation of the full experimental results, available on the \textsc{CutAl} GitHub page, for \maxHomScore{} shows that this criterion tends to underestimate the size of the largest block. In the 2-block experiments, this is particularly evident for $b\ell=0.1$, where for $location = 50, 100, 150$ the inferred breakpoint
is nearly always to the right of the correct position (e.g., for $t=50$ it is always between 150 and 205, irrespective of the true position of the breakpoint). This phenomenon is also detectable for $t=50$, $b\ell=0.01$ and $location = 50$, where the inferred breakpoint is on average more than 10 informative sites to the right of the correct beakpoint (cf.~Table \ref{tab:2block}). This behavior is not surprising, as the goal of this problem is to minimize the homoplasy in the block with the most homoplasy (which usually coincides with the largest block).
Thus, solutions of \maxHomScore{} will tend to
reduce the size of the largest block to reduce its homoplasy (while increasing the homoplasy of the blocks flanking the largest block).
The incentive to do so is stronger when the homoplasy within the largest block is much bigger than the homoplasy in its flanking blocks, a difference that we expect to be particularly pronounced when the tree has very long (or many) branches (for large values of $b\ell$ and/or $t$) or when the blocks have very different sizes.
As expected, when the true blocks have similar sizes (e.g., for $location=200$ in the 2-block experiment) this phenomenon has a tendency to disappear (cf.~Table \ref{tab:2block}).

\item 
\totHomRate{} suffers from a problem that can be seen as a very severe inverse of that of \maxHomScore{}: whenever per-site substitutions are frequent (in our experiments for $b\ell=0.1$), the partitions it produces tend to consist of a single large block and a number of smaller blocks. For example, for the extreme case where $t=50$ and $b\ell=0.1$: (a) in the 2-block experiments, all of the partitions returned consist of a block of 1 or 2 sites, and a block containing the remaining sites; (b) in the multi-block experiments, all partitions consist of a number of blocks of 1 or 2 sites, and a single block with all the remaining sites. (2-site blocks are much rarer than 1-site blocks, as they only occur in <5\% of these partitions.)
The reason why this happens is that when substitutions are frequent\----in other words, when the sites are saturated with substitutions---\-each block will have a high homoplasy ratio, regardless of whether the MP tree for that block accurately describes its evolution, or not. The only exception to this are very small blocks consisting of few sites, in which case it becomes possible to find a tree that explains those few sites with little or no homoplasy (e.g., 1-site blocks always have homoplasy ratio 0). 
So, in order to minimize the sum of homoplasy ratios for $k$ blocks, it becomes convenient to have $k-1$ small blocks with very small homoplasy ratios and one large block with high homoplasy ratio, rather than to have a realistic block partition where each block has an anyway high homoplasy ratio.
This is a very serious issue for \totHomRate{}, which results in very high breakpoint errors whenever the alignment is saturated with substitutions.

\end{itemize}


\begin{table}
\footnotesize 
\centering
\begin{tabular}{|l|l|l||l|l||l|l||l|l||l|l|}
\hline
 &  &  & \makecell{MAX \\ HOM \\ SCORE} &  & \makecell{TOT \\ HOM\\ SCORE} &  & \makecell{MAX \\ HOM \\ RATIO} &  & \makecell{TOT \\ HOM \\ RATIO} &  \\ \hline
taxa & \makecell{branch \\ length} & \makecell{breakpoint\\ location} & avg & sd & avg & sd & avg & sd & avg & sd \\ \hline
5 & 0.001 & 50 & \bf{0.00} & 0.00 & \bf{0.00} & 0.00 & \bf{0.00} & 0.00 & \bf{0.00} & 0.00 \\ \hline
5 & 0.001 & 100 & \bf{0.05} & 0.22 & \bf{0.05} & 0.22 & \bf{0.05} & 0.22 & \bf{0.05} & 0.22 \\ \hline
5 & 0.001 & 150 & \bf{0.15} & 0.37 & \bf{0.15} & 0.37 & \bf{0.15} & 0.37 & \bf{0.15} & 0.37 \\ \hline
5 & 0.001 & 200 & \bf{0.15} & 0.49 & \bf{0.15} & 0.49 & \bf{0.15} & 0.49 & \bf{0.15} & 0.49 \\ \hline
5 & 0.01 & 50 & \bf{0.25} & 0.64 & \bf{0.25} & 0.64 & \bf{0.25} & 0.64 & \bf{0.25} & 0.64 \\ \hline
5 & 0.01 & 100 & \bf{0.35} & 0.67 & \bf{0.35} & 0.67 & 0.75 & 1.45 & 0.75 & 1.45 \\ \hline
5 & 0.01 & 150 & 0.75 & 1.33 & \bf{0.70} & 1.30 & 0.85 & 1.50 & 0.85 & 1.50 \\ \hline
5 & 0.01 & 200 & 1.20 & 2.26 & \bf{1.15} & 2.28 & 1.25 & 2.29 & 1.25 & 2.29 \\ \hline
5 & 0.1 & 50 & 13.70 & 10.58 & \bf{5.75} & 16.28 & 15.50 & 26.66 & 18.80 & 26.82 \\ \hline
5 & 0.1 & 100 & 4.85 & 8.51 & \bf{1.85} & 4.46 & 3.50 & 5.28 & 15.80 & 19.66 \\ \hline
5 & 0.1 & 150 & 1.15 & 1.73 & \bf{0.70} & 1.22 & 1.20 & 1.47 & 8.25 & 11.67 \\ \hline
5 & 0.1 & 200 & 2.45 & 2.31 & \bf{2.30} & 6.61 & 3.90 & 8.28 & 10.05 & 15.42 \\ \hline
10 & 0.001 & 50 & \bf{0.05} & 0.22 & \bf{0.05} & 0.22 & \bf{0.05} & 0.22 & \bf{0.05} & 0.22 \\ \hline
10 & 0.001 & 100 & \bf{0.05} & 0.22 & \bf{0.05} & 0.22 & \bf{0.05} & 0.22 & \bf{0.05} & 0.22 \\ \hline
10 & 0.001 & 150 & \bf{0.10} & 0.31 & \bf{0.10} & 0.31 & \bf{0.10} & 0.31 & \bf{0.10} & 0.31 \\ \hline
10 & 0.001 & 200 & \bf{0.10} & 0.31 & \bf{0.10} & 0.31 & \bf{0.10} & 0.31 & \bf{0.10} & 0.31 \\ \hline
10 & 0.01 & 50 & 0.15 & 0.37 & \bf{0.05} & 0.22 & 0.15 & 0.49 & 0.20 & 0.70 \\ \hline
10 & 0.01 & 100 & 0.25 & 0.72 & \bf{0.00} & 0.00 & 0.05 & 0.22 & 0.05 & 0.22 \\ \hline
10 & 0.01 & 150 & 0.25 & 0.64 & \bf{0.10} & 0.45 & 0.25 & 0.64 & 0.25 & 0.55 \\ \hline
10 & 0.01 & 200 & 0.15 & 0.49 & \bf{0.00} & 0.00 & 0.10 & 0.45 & 0.05 & 0.22 \\ \hline
10 & 0.1 & 50 & 48.45 & 11.49 & \bf{0.15} & 0.37 & 3.00 & 4.70 & 33.40 & 36.38 \\ \hline
10 & 0.1 & 100 & 22.25 & 4.41 & \bf{0.35} & 0.59 & 3.30 & 5.03 & 77.00 & 55.78 \\ \hline
10 & 0.1 & 150 & 9.80 & 5.78 & \bf{0.45} & 0.89 & 3.00 & 3.39 & 63.40 & 51.56 \\ \hline
10 & 0.1 & 200 & 3.95 & 3.43 & \bf{0.40} & 0.60 & 4.20 & 4.50 & 34.30 & 53.41 \\ \hline
20 & 0.001 & 50 & \bf{0.15} & 0.49 & \bf{0.15} & 0.49 & \bf{0.15} & 0.49 & \bf{0.15} & 0.49 \\ \hline
20 & 0.001 & 100 & \bf{0.35} & 0.67 & \bf{0.35} & 0.67 & \bf{0.35} & 0.67 & \bf{0.35} & 0.67 \\ \hline
20 & 0.001 & 150 & 0.15 & 0.37 & \bf{0.10} & 0.31 & \bf{0.10} & 0.31 & \bf{0.10} & 0.31 \\ \hline
20 & 0.001 & 200 & \bf{0.35} & 0.49 & \bf{0.35} & 0.49 & \bf{0.35} & 0.49 & \bf{0.35} & 0.49 \\ \hline
20 & 0.01 & 50 & 1.70 & 4.17 & \bf{0.00} & 0.00 & 0.80 & 1.85 & 1.05 & 2.06 \\ \hline
20 & 0.01 & 100 & 0.35 & 0.99 & \bf{0.05} & 0.22 & 0.25 & 0.72 & 0.15 & 0.37 \\ \hline
20 & 0.01 & 150 & 0.05 & 0.22 & \bf{0.00} & 0.00 & 0.10 & 0.31 & 0.10 & 0.31 \\ \hline
20 & 0.01 & 200 & 0.40 & 0.60 & \bf{0.10} & 0.31 & \bf{0.10} & 0.31 & \bf{0.10} & 0.31 \\ \hline
20 & 0.1 & 50 & 91.40 & 10.45 & \bf{0.20} & 0.41 & 4.15 & 5.78 & 81.10 & 94.81 \\ \hline
20 & 0.1 & 100 & 44.75 & 5.74 & \bf{0.00} & 0.00 & 5.15 & 5.76 & 94.65 & 38.34 \\ \hline
20 & 0.1 & 150 & 20.00 & 3.37 & \bf{0.10} & 0.31 & 2.25 & 2.45 & 150.75 & 38.21 \\ \hline
20 & 0.1 & 200 & 4.15 & 3.13 & \bf{0.25} & 0.44 & 3.65 & 3.88 & 175.55 & 5.17 \\ \hline
50 & 0.001 & 50 & \bf{0.00} & 0.00 & \bf{0.00} & 0.00 & \bf{0.00} & 0.00 & \bf{0.00} & 0.00 \\ \hline
50 & 0.001 & 100 & \bf{0.05} & 0.22 & \bf{0.05} & 0.22 & \bf{0.05} & 0.22 & \bf{0.05} & 0.22 \\ \hline
50 & 0.001 & 150 & 0.30 & 0.57 & \bf{0.25} & 0.55 & 0.35 & 0.59 & 0.30 & 0.57 \\ \hline
50 & 0.001 & 200 & 0.30 & 0.73 & 0.25 & 0.55 & \bf{0.15} & 0.37 & 0.20 & 0.41 \\ \hline
50 & 0.01 & 50 & 11.30 & 7.26 & \bf{0.00} & 0.00 & 1.35 & 2.28 & 4.65 & 5.71 \\ \hline
50 & 0.01 & 100 & 4.00 & 3.49 & \bf{0.10} & 0.31 & 0.55 & 1.43 & 0.30 & 0.92 \\ \hline
50 & 0.01 & 150 & 2.45 & 2.72 & \bf{0.00} & 0.00 & 0.65 & 1.04 & 0.30 & 0.57 \\ \hline
50 & 0.01 & 200 & 1.05 & 1.47 & \bf{0.00} & 0.00 & 0.10 & 0.31 & 0.10 & 0.31 \\ \hline
50 & 0.1 & 50 & 113.45 & 5.60 & \bf{0.05} & 0.22 & 2.05 & 4.15 & 49.00 & 0.00 \\ \hline
50 & 0.1 & 100 & 57.80 & 3.11 & \bf{0.05} & 0.22 & 1.30 & 1.69 & 98.90 & 0.31 \\ \hline
50 & 0.1 & 150 & 28.25 & 2.99 & \bf{0.00} & 0.00 & 2.60 & 3.12 & 168.90 & 41.09 \\ \hline
50 & 0.1 & 200 & 2.15 & 1.42 & \bf{0.00} & 0.00 & 2.50 & 2.14 & 198.80 & 0.41 \\ \hline
 & & & & & & & & & &\\ \hline
 \hline
 & & AVG: & 10.32 & 2.45 & 0.37 & 0.93 & 1.48 & 2.28 & 26.90 & 10.68 \\ \hline
\end{tabular}
\caption{2-block experiment}
\label{tab:2block}
\end{table}

\begin{table}
\centering
\footnotesize
\begin{tabular}{|l|l|l||l|l||l|l||l|l||l|l|}
\hline
 &  &  & \makecell{MAX \\ HOM \\ SCORE} &  & \makecell{TOT \\ HOM \\ SCORE} & & \makecell{MAX \\ HOM \\ RATIO} &  & \makecell{TOT \\ HOM \\ RATIO} &  \\ \hline
taxa & \makecell{branch \\ length} & blocks & avg & sd & avg & sd & avg & sd & avg & sd \\ \hline
5 & 0.001 & 3 & \bf{0.03} & 0.11 & \bf{0.03} & 0.11 & \bf{0.03} & 0.11 & \bf{0.03} & 0.11 \\ \hline
5 & 0.001 & 4 & \bf{0.30} & 0.44 & \bf{0.30} & 0.44 & \bf{0.30} & 0.44 & \bf{0.30} & 0.44 \\ \hline
5 & 0.001 & 5 & *** & *** & *** & *** & *** & *** & *** & *** \\ \hline
5 & 0.001 & 6 & *** & *** & *** & *** & *** & *** & *** & *** \\ \hline
5 & 0.01 & 3 & 0.80 & 1.22 & \bf{0.60} & 1.12 & 0.68 & 1.13 & \bf{0.60} & 1.12 \\ \hline
5 & 0.01 & 4 & 1.33 & 1.34 & 1.30 & 1.28 & \bf{1.20} & 1.08 & \bf{1.20} & 1.08 \\ \hline
5 & 0.01 & 5 & 0.93 & 1.16 & \bf{0.88} & 1.18 & \bf{0.88} & 1.18 & \bf{0.88} & 1.18 \\ \hline
5 & 0.01 & 6 & \bf{1.31} & 1.27 & \bf{1.31} & 1.27 & \bf{1.31} & 1.27 & \bf{1.31} & 1.27 \\ \hline
5 & 0.1 & 3 & 7.58 & 8.30 & \bf{3.88} & 6.02 & 6.38 & 9.38 & 10.25 & 10.47 \\ \hline
5 & 0.1 & 4 & 4.13 & 4.63 & \bf{2.93} & 3.99 & 6.85 & 8.07 & 13.40 & 7.97 \\ \hline
5 & 0.1 & 5 & 5.93 & 4.31 & \bf{4.24} & 5.97 & 6.10 & 6.19 & 23.23 & 13.64 \\ \hline
5 & 0.1 & 6 & 4.23 & 3.37 & \bf{3.38} & 4.05 & 8.25 & 5.57 & 23.60 & 9.18 \\ \hline
10 & 0.001 & 3 & \bf{0.50} & 1.08 & \bf{0.50} & 1.08 & \bf{0.50} & 1.08 & \bf{0.50} & 1.08 \\ \hline
10 & 0.001 & 4 & \bf{0.25} & 0.36 & \bf{0.25} & 0.36 & \bf{0.25} & 0.36 & \bf{0.25} & 0.36 \\ \hline
10 & 0.001 & 5 & \bf{0.56} & 0.76 & \bf{0.56} & 0.76 & \bf{0.56} & 0.76 & \bf{0.56} & 0.76 \\ \hline
10 & 0.001 & 6 & \bf{0.37} & 0.42 & \bf{0.37} & 0.42 & \bf{0.37} & 0.42 & \bf{0.37} & 0.42 \\ \hline
10 & 0.01 & 3 & 0.83 & 1.32 & \bf{0.30} & 0.91 & 1.18 & 2.89 & 1.05 & 2.86 \\ \hline
10 & 0.01 & 4 & 1.53 & 2.83 & 0.72 & 2.09 & \bf{0.42} & 0.66 & \bf{0.42} & 0.67 \\ \hline
10 & 0.01 & 5 & 0.41 & 0.46 & \bf{0.15} & 0.19 & 1.10 & 1.54 & 1.00 & 1.73 \\ \hline
10 & 0.01 & 6 & 1.28 & 1.66 & \bf{0.63} & 1.05 & 1.17 & 1.78 & 0.64 & 0.97 \\ \hline
10 & 0.1 & 3 & 17.03 & 15.03 & \bf{0.63} & 0.93 & 5.13 & 6.71 & 80.03 & 33.37 \\ \hline
10 & 0.1 & 4 & 17.62 & 11.88 & \bf{0.55} & 0.60 & 4.40 & 3.31 & 86.60 & 32.55 \\ \hline
10 & 0.1 & 5 & 14.33 & 8.27 & \bf{0.34} & 0.34 & 6.54 & 11.32 & 80.95 & 21.78 \\ \hline
10 & 0.1 & 6 & 7.27 & 5.41 & \bf{0.33} & 0.31 & 6.35 & 7.04 & 96.84 & 21.16 \\ \hline
20 & 0.001 & 3 & \bf{0.80} & 1.37 & \bf{0.80} & 1.37 & 0.85 & 1.36 & 0.85 & 1.36 \\ \hline
20 & 0.001 & 4 & 0.77 & 0.91 & 0.77 & 0.91 & \bf{0.72} & 0.91 & \bf{0.72} & 0.91 \\ \hline
20 & 0.001 & 5 & \bf{0.88} & 0.76 & \bf{0.88} & 0.76 & \bf{0.88} & 0.76 & \bf{0.88} & 0.76 \\ \hline
20 & 0.001 & 6 & \bf{0.66} & 0.60 & \bf{0.66} & 0.60 & \bf{0.66} & 0.60 & \bf{0.66} & 0.60 \\ \hline
20 & 0.01 & 3 & 1.13 & 1.19 & \bf{0.10} & 0.21 & 0.70 & 0.80 & 0.45 & 0.69 \\ \hline
20 & 0.01 & 4 & 1.37 & 1.18 & \bf{0.17} & 0.28 & 1.27 & 3.97 & 0.60 & 1.24 \\ \hline
20 & 0.01 & 5 & 1.14 & 1.65 & \bf{0.18} & 0.18 & 1.23 & 3.00 & 1.50 & 3.86 \\ \hline
20 & 0.01 & 6 & 0.63 & 0.61 & \bf{0.11} & 0.15 & 0.72 & 1.31 & 0.45 & 0.87 \\ \hline
20 & 0.1 & 3 & 39.03 & 20.64 & \bf{0.28} & 0.34 & 1.78 & 2.11 & 121.48 & 35.40 \\ \hline
20 & 0.1 & 4 & 24.80 & 11.97 & \bf{0.08} & 0.15 & 2.78 & 1.64 & 144.58 & 42.29 \\ \hline
20 & 0.1 & 5 & 18.70 & 9.01 & \bf{0.14} & 0.19 & 4.08 & 6.74 & 148.66 & 37.15 \\ \hline
20 & 0.1 & 6 & 13.41 & 7.02 & \bf{0.09} & 0.12 & 2.83 & 1.72 & 131.72 & 35.70 \\ \hline
50 & 0.001 & 3 & 0.83 & 1.40 & \bf{0.80} & 1.41 & \bf{0.80} & 1.41 & \bf{0.80} & 1.41 \\ \hline
50 & 0.001 & 4 & 1.65 & 1.83 & 1.35 & 1.59 & \bf{1.28} & 1.59 & \bf{1.28} & 1.59 \\ \hline
50 & 0.001 & 5 & 1.60 & 1.86 & 1.59 & 1.87 & 1.56 & 1.88 & \bf{1.55} & 1.89 \\ \hline
50 & 0.001 & 6 & 1.70 & 1.68 & 1.66 & 1.71 & 1.52 & 1.40 & \bf{1.48} & 1.41 \\ \hline
50 & 0.01 & 3 & 6.88 & 5.85 & \bf{0.13} & 0.22 & 1.05 & 1.16 & 1.55 & 3.69 \\ \hline
50 & 0.01 & 4 & 5.57 & 6.48 & \bf{0.15} & 0.20 & 0.73 & 0.68 & 0.80 & 1.45 \\ \hline
50 & 0.01 & 5 & 2.76 & 2.51 & \bf{0.13} & 0.21 & 1.03 & 0.89 & 1.90 & 6.83 \\ \hline
50 & 0.01 & 6 & 1.44 & 0.68 & \bf{0.09} & 0.14 & 0.91 & 1.00 & 0.19 & 0.21 \\ \hline
50 & 0.1 & 3 & 52.08 & 26.90 & \bf{0.13} & 0.22 & 3.20 & 3.37 & 135.93 & 56.92 \\ \hline
50 & 0.1 & 4 & 29.05 & 18.99 & \bf{0.07} & 0.14 & 1.07 & 0.83 & 162.22 & 44.56 \\ \hline
50 & 0.1 & 5 & 23.43 & 10.96 & \bf{0.05} & 0.10 & 1.66 & 1.04 & 149.60 & 39.10 \\ \hline
50 & 0.1 & 6 & 18.12 & 7.31 & \bf{0.10} & 0.15 & 2.04 & 1.41 & 149.65 & 34.72 \\ \hline
 & & & & & & & & & & \\ \hline
 \hline
 & & AVG & 7.32 & 4.76 & 0.75 & 1.04 & 2.07 & 2.48 & 34.42 & 11.28 \\ \hline
\end{tabular}
\caption{Multiple-block experiment}
\label{tab:multiblock}
\end{table}



\section{Discussion}
\label{sec:discussion}


{\color{black} Inference methods estimating quantities that are 
paramount to our understanding of molecular evolution (e.g. positive selection  and branch length estimates) can be seriously mislead by recombination.
Since no theory yet exists to define the amount of recombination that can be borne by
these methods to still produce the correct answer, a safer way to deal with the problem is to restrict one's attention to the \fabio{analysis} of non-recombinant loci.}

In this article, we investigated several different parsimony-based approaches towards detecting recombination breakpoints in a multiple sequence alignment, and we described algorithms for each of our formulations. These algorithms have been implemented and, via tests on simulated data, we have identified 
a number of important weaknesses of two of these approaches.
Our experiments also suggest that the best formulation of block partitioning, among those that we have considered, is \totHomScore{}. We note that this formulation 
is closely linked to that of \textsc{Mdl} \citep{ane2011detecting}: it is easy to see that for any value of the penalty parameter against the number of blocks in \textsc{Mdl}, there exists a value of the parameter $b$ for \totHomScore{} that would yield the same block partition. 



The work in this paper has led to several interesting open computational problems which could be studied in further research. 
First of all, a fundamental question is whether one can decide in polynomial time whether two blocks are mergeable, i.e., whether they can be combined into a single block without increasing the total homoplasy. Simply computing the homoplasy for each block and the combined block does not work (unless P $=$ NP) because computing the homoplasy of a block is NP-hard. Hence, a more advanced strategy would need to be developed. 
A slightly different but equally interesting question is the following. Suppose we are given a tree with minimum homoplasy for the first~$j$ columns of an alignment. Can we decide in polynomial time whether the same tree has minimum homoplasy for the first~$j+1$ columns?
We would also like to reiterate here a well-known open problem in the area, which is strongly related to the studied problems. Does there exists an FPT algorithm deciding whether there exists a tree with homoplasy at most~$h$ for a given alignment of nonbinary characters, when~$h$ is the parameter? In particular, this problem (the $h$-near perfect phylogeny problem) is even open for $3$-state characters. Answering this question positively would also extend the FPT results in this paper to nonbinary characters.

The dependency of our algorithms on the length of the alignment is at least cubic. This is due to the fact that we allow breakpoints to occur at any position of the alignment, and that we want to detect these positions precisely. These requirements could be relaxed by decreasing
the granularity of breakpoint detection, only allowing breakpoints to be inferred at
given positions of the alignments. 
This approach would make it possible to analyse longer alignments, and has already been adopted by tools such \textsc{Mdl} (which only allowed the inference of a breakpoint every 300 sites in their study~\citep{ane2011detecting}) and \textsc{Gard} (which only allows breakpoints to be inferred up to the nearest variable site~\citep{kosakovsky2006automated}).


Another avenue of research would be to adapt \textsc{CutAl}'s approach  to a likelihood-based context. For example, instead of minimizing the sum of the block homoplasies, we could aim to
maximize the sum of the $\log$ likelihoods for the blocks (basically replacing homoplasy with the $- \log$ of the likelihood in \totHomScore). Since our implementations are based on a precalculation of the homoplasy scores for contiguous subalignments of the input alignment, the same could be done for $\log$ likelihoods relatively easily, although at the cost of some efficiency. Such an approach could benefit likelihood-based programs for breakpoint detection, for example  \textsc{Gard}~\citep{kosakovsky2006automated,kosakovsky2006gard} which is based on a genetic algorithm to search the space of block partitionings, and on a information-theory penalty to limit the number of blocks. Our results imply that  partitions that \textsc{Gard} would consider optimal for a fixed number of blocks can be calculated efficiently (assuming that log-likelihoods are precalculated for a large set of intervals in the input alignment). This would eliminate the need for a heuristic search within the space of block partitionings of a fixed size, thus largely simplifying \textsc{Gard}'s optimization.


Note that while the drawbacks of maximum parsimony for phylogenetic reconstruction are known and well-understood~\citep{felsenstein1978}, an investigation of parsimony's flaws when the goal is alignment partitioning rather than tree inference is still lacking. 
In fact some authors have suggested that a possible bias of maximum parsimony in this context would be to detect extra breakpoints~\citep{ane2011detecting}, 
which is not an excessively disruptive bias, and could be resolved by merging contiguous blocks separated by breakpoints where the evidence of recombination is not sufficiently strong.

\bibliographystyle{spbasic}

\bibliography{main}

\pagebreak

\appendix
\section{Appendix: full description of experiments}

All experiments were conducted on
the Linux Subsystem (Ubuntu 16.04.6 LTS), running under Windows 10, on a 64-bit HP Envy Laptop 13-ad0xx (quad-core i7-7500 @ 2.7 GHz), with 8 Gig of memory. \textsc{CutAl} was compiled with GCC. We incorporated code from \textsc{Parsimonator} version 1.0.2. We used \textsc{DAWG} version 1.2 \citep{cartwright2005dna} to simulate alignments. This software is capable of simulating the evolution of nucleotides along the branches of a phylogenetic tree, such that different blocks of the generated alignment correspond to different phylogenetic trees.\\


\textbf{Protocol common to both experiments}
\begin{enumerate}

\item All random sampling in the experiments is with-replacement (i.e., repetition is allowed).
\item Binary phylogenetic trees, one per block, are generated by randomly splitting the taxa into two subgroups, and iterating this process on the subgroups until sets comprising single taxa are obtained (i.e., the leaves). In other words binary rooted trees are sampled following the standard Yule-Harding distribution \citep{harding1971probabilities}, which also arises in the context of the coalescent model \citep{aldous1996probability,kingman1982genealogy}.

\item The generated DNA alignments consist of 400 nucleotides, without indels.
\item When executing \textsc{CutAl}, an upper bound on the number of blocks has to be indicated. We set this equal to the number of blocks $k$.
We also feed \textsc{CutAl} $breakpoints$ so that it can compute breakpoint errors for us; this auxiliary functionality is built into \textsc{CutAl}).
\item DAWG is asked to generate a DNA alignment with $k$ blocks on $t$ taxa, where the breakpoint locations are indicated by $breakpoints$. For each block we ask that the nucleotides evolve according to the Jukes Cantor model along a randomly selected binary tree with $t$ taxa where \emph{all} the branches have length $b\ell$. (DAWG does not select the trees itself: we generate them using the method described in point 2). 
\item For each combination of experimental parameters, we perform 20 iterations (`replicates'), and then compute the average and standard deviation of the breakpoint errors for each of the four optimization models. Note that, although all replicates share the same set of experimental parameters, binary trees and alignments are not held constant across the replicates. In other words, for each replicate we select a new set of binary trees (one per block) and generate a new alignment with DAWG.
\item We \emph{never} feed \textsc{CutAl} alignments that contain completely uninformative blocks. Specifically: if a replicate produces an alignment which has at least one completely uninformative block, we discard that and try generating a new replicate, repeating as often as necessary until we obtain an alignment in which all blocks are informative. The reasons for doing this are explained in the point below. We emphasize that it is \emph{not} due to an inherent limitation of \textsc{CutAl}, but rather an artefact of the way that we measure breakpoint errors in this experiment. Practically, discarding alignments in this way means that, to obtain 20 replicates for a given parameter combination, we might have to generate (many) more than 20 alignments. In particular, for the multiple-block experiment this means we have to omit two extreme parameter combinations: essentially, those with many blocks, short branch lengths and few taxa. The chance that all blocks are informative under these parameters is so low, that it is prohibitively difficult to randomly generate such inputs. 
\item The score used for measuring beakpoint errors is defined as follows. Recall that the left-most block always starts in the first column of the alignment, and the right-most block always finishes in the last column of the alignment, so in terms of testing the accuracy with which breakpoints are inferred, we only need to look at the $k-1$ ``internal'' breakpoints. \textbf{We take the average, ranging over the $k-1$ internal breakpoints, of the absolute difference between the location of that breakpoint given by \emph{breakpoints} and the location of the corresponding breakpoint inferred by \textsc{CutAl}.} So, for example, if there are 3 blocks with internal breakpoints at positions 50 and 150 and \textsc{CutAl} infers blocks with internal breakpoints at positions 65 and 147 respectively, the score will be $(|50-65| + |150-147|)/2 = 9$. Three points are worth noting:
\begin{enumerate}
\item The score assumes a left-to-right bijection between the experimentally generated breakpoints, and those inferred by  \textsc{CutAl}. This is why we discard alignments that contain completely uninformative blocks: they disrupt this bijection and thus lead to artificially and punitively high errors. 
\item When computing the absolute difference between the locations of two corresponding internal breakpoints, uninformative sites do not contribute to the difference, in the following sense.
If there are $x$ informative sites and $y$ uninformative sites between the two breakpoints, the absolute difference is taken to be $x$, rather than $x+y$. In particular, if \textsc{CutAl} and the experimentally generated partition place corresponding breakpoints in the same region of uninformative sites, this counts as an absolute difference of $0$, i.e., total agreement. 
This is because, from the perspective of the four optimization models, all these locations are indistinguishable.

\item When computing the score,  if \textsc{CutAl} infers that optimality is obtained (for a given optimization model) by a number of blocks strictly less than $k$, \textsc{CutAl} uses the best solution with \emph{exactly} $k$ blocks to compute the boundary errors.
\end{enumerate}

\item Since uninformative sites are indistinguishable from the perspective of the four  optimization models,  \textsc{CutAl} does not return a simple block partition but rather its informative restriction. For example, if \textsc{CutAl} would return $\contBlockS{1}{65},\contBlockS{66}{147}, \contBlockS{148}{200}$ but sites 64, 65, 66, 148 and 160 are uninformative, \textsc{CutAl} will instead return $\contBlockS{1}{63},\contBlockS{67}{147}, \contBlockS{149}{200}$.



\item To enhance the readability of the results, for each parameter combination we have only reported average and standard deviation (of the breakpoint error, ranging over the 20 replicates) in the results tables, and \steven{below we also give} a rough indication of running times. However, in the supplementary material \steven{available on the \textsc{CutAl} GitHub page} we have provided the following, more complete information for each parameter combination:
\begin{enumerate}
    \item Running times of individual replicates;


    \item The number of times that, for each of the four optimization models, the optimum inferred by \textsc{CutAl} contained strictly fewer than $k$ blocks;
    \item The number of alignments that we had to sample (above the baseline of 20) in order to obtain 20 replicates where all blocks of the alignment are informative.
\end{enumerate}
\end{enumerate}

%
\textbf{Protocol specific to 2-block experiment ($k=2$) \steven{and summary of running times}}
\begin{enumerate}
\item We select $location \in \{ 50, 100,150, 200\}$. So the alignment consists of two blocks, where the first has length $location$ and the second has length $(400 - location)$. This ensures a balance between situations where the two blocks have very different sizes, and situations where they have the same size.
\item As indicated earlier: for each $t, b\ell, location$ parameter combination we take 20 replicates, and each replicate selects new trees and alignments. So in this experiment trees and alignments are random variables. 
\item For $t \in \{5,10\}$,  \textsc{CutAl} solves each replicate to optimality (for all four optimization models combined) in time ranging from 3 seconds to 11 seconds. For $t=20$ the running time varies between 11s and 43s, and within this range increases with branch length. For $t=50$ the running time varies between 57s and 425s, again increasing inside this range with branch length. \steven{Note that these running times include the time to execute \textsc{Parsimonator} on all intervals of the input alignment.}

\item See Table \ref{tab:2block} for the results. In each row, the smallest average (ranging across the four optimization models) is shown in bold.
\end{enumerate}

\textbf{Protocol specific to multiple-block experiment \steven{and summary of running times}}
\begin{enumerate}
\item We select $k \in \{3,4,5,6\}$.
\item Unlike the 2-block experiment, where every replicate uses new trees and new alignments, in this case every replicate \emph{also} randomly selects new breakpoint locations. So in this experiment the random variables are trees, alignments and breakpoint locations.
\item Breakpoint locations are sampled uniformly at random, subject to the following rules: the length of each block is a multiple of 50, with a minimum length of 50.
\item We omit the parameter combinations where $(t, b\ell, k)$ is equal to (5,0.001, 5) and (5, 0.001, 6). This is due to technicalities associated with completely uninformative blocks, discussed earlier. Specifically, with these parameter combinations the chance of observing an alignment where all blocks are informative is extremely low. In the table of results we have used ``*'' to indicate that these two parameter combinations have been excluded.
\item Running times are of a similar magnitude, and follow a similar pattern, to the 2-block experiment. The number of taxa, and then to a lesser extent the branch lengths, and then to a lesser extent the number of blocks, are correlated with increasing times.  
\item See Table \ref{tab:multiblock} for the results. In each row, the smallest average (ranging across the four optimization models) is shown in bold.
\end{enumerate}

\end{document}